\documentclass[aps,pra,twocolumn,amsfonts,amssymb,amsmath,showpacs,
floatfix,nofootinbib,citesort]{revtex4-2}
\usepackage{mathrsfs}
\usepackage{amsfonts}
\usepackage{amstext}
\usepackage{amsmath}
\usepackage{amssymb}
\usepackage{bm}
\usepackage{CJK}
\usepackage{bbm}
\usepackage[dvips]{graphicx}
\def\qed{\leavevmode\unskip\penalty9999 \hbox{}\nobreak\hfill
	\quad\hbox{\leavevmode  \hbox to.77778em{%
			\hfil\vrule   \vbox to.675em%
			{\hrule width.6em\vfil\hrule}\vrule\hfil}}
	\par\vskip3pt}

\usepackage{amssymb}
\usepackage{graphicx}
\usepackage{graphics}
\usepackage{amsmath}
\usepackage{amsthm}
\usepackage{color}
\usepackage{booktabs}
\usepackage{makecell}
\usepackage{dsfont}
\usepackage{textcomp}
\usepackage{threeparttable}
\definecolor{darkred}  {rgb}{0.5,0,0}
\definecolor{darkblue} {rgb}{0,0,0.5}
\definecolor{darkgreen}{rgb}{0,0.5,0}
\usepackage{hyperref}
\hypersetup{
	pdftitle = {QRT Proposal},
	pdfauthor = {},
	colorlinks = true,
	urlcolor  = blue,         
	linkcolor = red,     
	citecolor = blue,    
	filecolor = darkred       
}
\usepackage{mathtools}
\def\ra{\rangle}
\def\la{\langle}

\newtheorem{theorem}{Theorem}

\newtheorem{cor}[theorem]{Corollary}

\newcommand{\bea}{\begin{eqnarray}}
	\newcommand{\eea}{\end{eqnarray}}
\newcommand{\be}{\begin{equation}}
	\newcommand{\ee}{\end{equation}}
\newcommand{\ba}{\begin{equation}\begin{aligned}}
		\newcommand{\ea}{\end{aligned}\end{equation}}

\newcommand{\beax}{\begin{eqnarray*}}
	\newcommand{\eeax}{\end{eqnarray*}}
\newcommand{\bex}{\begin{equation*}}
	\newcommand{\eex}{\end{equation*}}

\theoremstyle{remark}

\newtheorem{example}{Example}

\def\be{\begin{equation}}
	\def\ee{\end{equation}}

\newcommand{\mE}{\mathcal{E}}

\newcommand{\mH}{\mathcal{H}}

\newcommand{\mT}{\mathcal{T}}

\newcommand{\mP}{\mathcal{P}}

\newcommand{\mS}{\mathcal{S}}

\newcommand{\lr}{\rangle\langle}

\newcommand{\tr}{{\rm Tr}}

\newcommand{\bra}[1]{\langle #1|}
\newcommand{\ket}[1]{|#1\rangle}

\newcommand{\Fi}{\it{\rm Fi}}
\newcommand{\rmi}{{\rm i}}

\begin{document}
	

\preprint{APS/123-QED}
\begin{CJK*}{GB}{gbsn}
\title{Quantum-state texture measure and texture transformation\\}


\author{Yufan Lin}
\author{Yu Guo}
\email{guoyu3@aliyun.com}		
\author{Fei He}

\affiliation{School of Mathematical Sciences, Inner Mongolia University, Hohhot, Inner Mongolia 010021, People's Republic of China}
\affiliation{Inner Mongolia Key Laboratory of Mathematical Modeling and Scientific Computing, Inner Mongolia University, Hohhot, Inner Mongolia 010021, People's Republic of China}

\author{Shuanping Du}

\affiliation{School of Mathematical Sciences, Xiamen University, Xiamen, Fujian, 361000, People's Republic of China}


\begin{abstract}
Quantum-state texture quantifies structural irregularities of a state in a selected basis and has emerged as a valuable resource for gate characterization in universal circuits. Consequently, a class of contractive distance based texture measures and a framework for constructing convex-roof extended texture measure have been proposed. Here, we extend this theory in several directions. We examine the convertibility of a pure state to any state under texture-free operations, and fully solve the deterministic transformation problem between any two states in the qubit case. We then propose two classes of texture measures: convex-function-based measures and texture cost defined via minimal cost of texture contained in pure states. Moreover, we show that every texture monotone gives rise to a non-negative function which admits the same properties as the function used in the convex-roof extension---thus providing a converse to that construction. Our work thereby offers a relatively comprehensive resource-theoretic formulation of quantum-state texture.

\end{abstract}

\maketitle
\end{CJK*}		


\section{Introduction}


Quantum resource theory (QRT) has established a unified operational framework for the characterization of nonclassicality in quantum systems. Under this umbrella, features including entanglement~\cite{Horodecki2009rmp,Vidal2000jmo,Ekert1991prl,Guo2023njp}, coherence~\cite{Baumgratz2014prl,Streltsov2017prl,Du2015qic}, asymmetry~\cite{Gour2009pra}, and imaginarity~\cite{Hickey2018jpa,Du2025pra,Zheng2025jpa,Wu2021pra,Wu2025ctp} are reinterpreted not as incidental mathematical attributes, but as physical resources amenable to consumption, manipulation, and quantification---thereby offering operational advantages in tasks such as quantum key distribution~\cite{Ma2019pra,Curty2004prl}, quantum computation~\cite{Hillery2016pra,Shi2017pra}, and quantum thermodynamics~\cite{Masanes2017nc,Brandao2015}. A generic QRT is delineated by three essential components: the specification of free states, the admissible class of free operations, and an appropriate resource measure. Among these, the construction of reliable resource quantifiers is of paramount importance, as it underpins the assessment of resourcefulness, the study of state convertibility, and the elucidation of operational significance.

In this context, Parisio recently introduced  the notion of quantum-state texture (QST) as a novel type of quantum resource~\cite{Parisio2024prl}. Given a fixed reference basis $\{|i\ra\}_{i=0}^{d-1}$ in the state space $\mH$, a quantum state can be represented by its density matrix, which may be visualized as a three-dimensional landscape, where the matrix elements determine the ``height'' of the surface. In general, this landscape exhibits irregularities, referred to as ``texture'', while the only textureless state corresponds to the uniform superposition state  $\tau_f=\frac{1}{d}\sum_{i,j=0}^{d-1}|i\lr j|$. Very recently, two distinct generalizations of quantum-state texture have been developed: Chu \textit{et al.}~\cite{Chu2026aqt} introduced quantum-state block texture and established its resource-theoretic framework, while Greenwood \textit{et al.}~\cite{Greenwood2026pra} generalized quantum-state texture by allowing any pure state to be chosen as the minimum-resource state. Vieira \textit{et al.}~\cite{Vieira2026arvix} developed a general dynamical theory of quantum-state texture under arbitrary quantum channels and experimentally demonstrated its dynamics and operational applications. In addition, Huang \textit{et al}.~\cite{Huang2025qip} explored quantum-state texture dynamics in the system of uniformly accelerated atoms interacting with a massive scalar field.

The significance of QST lies in both its conceptual novelty and its potential applications. Parisio established a resource-theoretic framework for QST and introduced state rugosity as a computable quantifier. In particular, QST was shown to enable the identification of unknown quantum circuit layers containing CNOT gates without full state tomography or ancillary systems. C\'eleri \textit{et al}.~\cite{Celeri2026arvix} found that quantum-state texture signals dynamical quantum phase transitions.

A central problem in quantum resource theories concerns the quantum state transformations, i.e., the manipulation of physical resource, which asks whether a quantum state $\rho$ can be transformed into another state $\sigma$ under a given set of free operations~\cite{Chitambar2019rmp,Regula2022prl}. The answer to this question
determines ``What tasks may be accomplished using a given physical resource''. For the case of entanglement and coherence, this problem has been extensively studied in various settings, including deterministic transformations, probabilistic transformations, and asymptotic transformations \cite{Du2015qic,Shi2017sr,Wu2020npj,Yu2020pra,Du2024qic,Nielsen1999prl,Du2015pra,Nielsen2001qic,Ferrari2023cmp,Datta2023prl}. In Ref.~\cite{Wu2021pra}, the conversion of imaginarity was discussed. They firstly obtained, for any quantum resource, an upper bound of the maximal probability for converting  two arbitrary states in terms of any associated strong resource monotone. By this upper bound, a maximal probability for probabilistic  transformations between pure states under real operations was proposed. Furthermore, they derived a necessary and sufficient condition for deterministic transformations between two arbitrary qubit states, yielding a complete description of state conversion in the qubit regime. This motivates us to discuss such an issue for the case of texture.

A valid quantum-state texture measure, denoted by $\mT$, should satisfy three fundamental requirements: non-negativity, monotonicity, and convexity.  
Based on the above theoretical framework, a variety of quantum-state texture measures have been proposed~\cite{Wang2025pra,Zhang2025pla,Cao2026jpa,Muthuganesan2026pla,Patra2026pra,Chen2026jpa,Cui2025arvix}. Representative examples include the trace-distance texture measure, fidelity-based texture measure, Bures distance texture measure, and geometric texture measure introduced by Wang \emph{et al.}~\cite{Wang2025pra}, the weight-based texture measure by Zhang \emph{et al.}~\cite{Zhang2025pla}, Tsallis relative entropy texture measure ~\cite{Zhang2025pla,Cao2026jpa}, the Hellinger distance texture measure, Jensen-Shannon divergence texture measure, and Wigner-Yanase-Dyson (WYD) skew information based texture measure developed by Muthuganesan~\cite{Muthuganesan2026pla}, and the convex-roof extended texture measure by Cui~\cite{Cui2025arvix}. All these measures have counterparts in the context of other convex quantum resource theories. Note that in the context of texture, there is only one free state, so we expect to find new approaches to constructing texture quantifiers.

Another type of resource measure is defined as the minimal cost of pure resource state from which the given state can be obtained under free operations. For example, such measures include the entanglement measure in terms of LOCC on pure states~\cite{Yu2021cpb,Shi2021anpb}, coherence cost~\cite{Yu2020pra,Winter2016prl,Zhao2018prl} and imaginarity cost~\cite{Du2025pra}. We discuss in this paper the counterpart for texture, which we call  texture cost.

The rest of this paper is organized as follows. In Sec.~\ref{preli}, we briefly introduce the basic concepts of the texture resource theory and review the existing measures of quantum-state texture. In Sec.~\ref{state}, we investigate the state transformation problems under the texture-free operations. In particular, we discuss the qubit case in detail. Sec.~\ref{build} presents two kinds of new approaches for defining texture measures and analyzes the relation between a general texture monotone and the convex-roof extended texture monotone. Finally, we conclude this work in Sec.~\ref{concl}.


\section{Preliminary: Texture and texture measure}\label{preli}


\subsection{Basic concepts}

Throughout this paper, we denote by $\mS:=\mS(\mH)$ the set of all quantum states on the Hilbert space $\mH$ with $\dim\mH=d\geqslant2$. 
For a given orthonormal basis $\{|i\ra\}_{i=0}^{d-1}$ in $\mH$,
\beax
\tau_f=|f_1\lr f_1|=\frac{1}{d}\sum_{i,j=0}^{d-1}|i\lr j|	
\eeax
is the only one textureless state~\cite{Parisio2024prl} in this basis, where $|f_1\ra=\frac{1}{\sqrt{d}} \sum_{i=0}^{d-1}|i\ra$. Any state in $\mS-\{\tau_f\}$ is a resource state or called a textured state in the resource theory of texture.

The free operation of texture~\cite{Parisio2024prl}, which we call a texture-free operation hereafter, is the completely positive and trace-preserving (CPTP) map $\mE$ that also satisfies $K_j|f_1\ra\propto |f_1\ra$ for all $j$, where $K_j$'s are the Kraus operators corresponding to $\mE$, i.e., $\mE(\rho)=\sum_jK_j \rho K_j^\dagger$ with $\sum_jK_j^\dagger K_j = I$, $I$ denotes the identity operator on $\mH$. Clearly, $\tau_f$ is a fixed point of $\mE$, namely, $\mE(\tau_f) = \tau_f$.

A nonnegative function $\mT: \mS\rightarrow[0, +\infty)$ is called a texture measure if $\mT$ satisfies the following
conditions (T1)-(T3)~\cite{Parisio2024prl}: 

(T1) $\mT(\tau_f)=0$;

(T2) Monotonicity: $\mT(\mE(\rho))\leqslant\mT(\rho)$ for any $\rho\in\mS$ and any texture-free operation $\mE$;

(T3) Convexity: $\mT(\sum_jp_j\rho_j)\leqslant\sum_jp_j{\mT}(\rho_j)$ for any ensemble $\{p_j,\rho_j\}$.

Following the established theories of quantum coherence measures and quantum imaginarity measures, two additional desirable properties were put forward in Ref.~\cite{Zhang2025pla}:

(T4) Strong monotonicity: for any texture-free operation $\mE$ with Kraus operators  $K_j$'s, $\mT(\rho)\geqslant\sum_jp_j\mT(\sigma_j)$, where $\sigma_j={K_j\rho K_j^\dagger}/{p_j}$, $p_j=\tr(K_j\rho K_j^\dagger)$.

(T5) Direct-sum additivity: $\mT(p\rho_1\oplus(1-p)\rho_2)= p\mT(\rho_1)+(1-p)\mT(\rho_2)$, 
where $\rho_1\in\mS(\mH_1)$ and $\rho_2\in\mS(\mH_2)$. The reference basis of $\mH_1\oplus \mH_2$ is taken as the direct-sum $\{|i\ra\oplus\mathbf{0}_2,\mathbf{0}_1 \oplus |j\ra\}$ of the fixed bases $\{|i\ra\}$ of $\mH_1$ and $\{|j\ra\}$ of $\mH_2$, where $\mathbf{0}_{1,2}$ denotes the zero vector in $\mH_{1,2}$.

Items (T1)-(T5) are indeed the counterparts of the corresponding properties for the coherence and imaginarity measures, denoted by (C1)-(C5) and (I1)-(I5), respectively~\cite{Xue2021qip,Yu2016pra1,Xu2020pra}. These properties are not independent but are connected through specific logical implications. In particular, any measure that satisfies (C1)-(C4) (resp. (I1)-(I4)) also fulfills (C5) (resp. (I5)). Conversely, a measure satisfying (C1), (C2), and (C5) (resp. (I1), (I2), and (I5)) necessarily satisfies both (C3) and (C4) (resp. (I3) and (I4)). However, it is not valid for texture measure~\cite{Zhang2025pla}: (i) even if a QST measure satisfies (T1)-(T4), it does not guarantee that it also satisfies (T5); (ii) if a measure $T$ satisfies (T1), (T2), and (T5), then strong monotonicity (T4) may hold even when (T3) does not. Hereafter, if a texture measure satisfies item (T4), we call it a texture monotone.

\subsection{Texture measures}

In quantum resource theory, one often uses a class of contractive distances to quantify the ``amount'' of resource contained in the states~\cite{Baumgratz2014prl,Vedral1997prl,Streltsov2017rmp}. Let $D(\rho,\sigma)$ be a distance or divergence defined on quantum states. If for every quantum channel, i.e., every CPTP map $\mE$, it satisfies
\beax
D(\rho,\sigma)\geqslant D\bigl(\mE(\rho),\mE(\sigma)\bigr),
\eeax
$D$ is said to be contractive under quantum operations~\cite{Vedral1997prl}.
In the following, we introduce texture measures constructed by this kind of contractive distances (or divergences), such as the trace distance measure~\cite{Wang2025pra}, fidelity-based measure~\cite{Wang2025pra}, Bures measure~\cite{Wang2025pra}, Tsallis relative entropy measure~\cite{Zhang2025pla,Cao2026jpa}, Jensen-Shannon divergence measure~\cite{Muthuganesan2026pla,Cao2026jpa}, and Hellinger distance measure~\cite{Muthuganesan2026pla}.

The trace distance measure is defined by~\cite{Wang2025pra}
\be\label{trace-distance}
\mT_{tr}(\rho)=\dfrac{1}{2}\left\|\rho -\tau_f\right\|_{\tr},
\ee
where $\|A\|_\tr=\tr\sqrt{A^\dagger A}$ is the trace norm of the matrix $A$.
Fidelity-based measure is~\cite{Wang2025pra}
\be
\mT_{\Fi}(\rho)=1-F(\rho,\tau_f),
\ee
where $F(\rho,\sigma)=\left[\tr\left( \sqrt{\rho^{1/2} \sigma \rho^{1/2}} \right) \right]^2$ is the Uhlmann fidelity.
The Bures measure is defined via the fidelity~\cite{Wang2025pra}, i.e.,
\be
\mT_B(\rho)=2\left(1-\sqrt{F(\rho, \tau_f)} \right).
\ee
Relative entropy measure~\cite{Wang2025pra}
\be\label{relative-entropy}
\mT_r(\rho)=S(\rho\|\tau_f).
\ee
Tsallis relative entropy measure~\cite{Zhang2025pla,Cao2026jpa}
\be
\mT_{\mu}^{TS}(\rho)=\dfrac{1-\la f_1|\rho^{\mu}|f_1\ra}{1-\mu}, \quad \mu \in (0,1),
\ee
and the Jensen-Shannon divergence measure~\cite{Muthuganesan2026pla,Cao2026jpa}
\be
\mT_J(\rho)={J}(\rho,\tau_f)=S\left(\dfrac{\rho+\tau_f}{2} \right)-\dfrac{1}{2}\left[S(\rho)+S(\tau_f) \right]
\ee
are entropy-based measures, where $S(\cdot)$ denotes the von Neumann entropy, 
$S(\rho\|\sigma)=\tr(\rho\log_2\rho-\rho\log_2\sigma)$ is the relative entropy.
$\mT_J(\rho)$ satisfies~\cite{Muthuganesan2026pla}
\be\label{bound}
\frac{[\mT_{tr}(\rho)]^2}{2\ln 2}\leqslant \mT_J(\rho)\leqslant H_2\left( \frac{1+\sqrt{F(\rho,\tau_f)}}{2} \right),
\ee
where
$H_2(x)=-x\log_2x-(1-x)\log_2(1-x)$
is the binary entropy function.
Another contractive distance is the Hellinger distance, which gives the Hellinger distance measure~\cite{Muthuganesan2026pla}
\be
\mT_{he}(\rho)=\tr\left( \sqrt{\rho}-\sqrt{\tau_f}\right)^2.
\ee
$\mT_{he}(\rho)$ is bounded by~\cite{Muthuganesan2026pla}
\be 
2\mT_B(\rho)\leqslant\mT_{he}(\rho)\leqslant2\mT_{tr}(\rho).
\ee

There are other types of measures that are not deduced from the contractive distance. For example, the state rugosity measure $\mT_R$~\cite{Parisio2024prl}, the geometric texture measure $\mT_g$~\cite{Wang2025pra}, and the weight of quantum state texture $\mT_w$~\cite{Zhang2025pla}. Recall that,
\bea\label{rug}
\mT_R(\rho)&=&-\ln\la f_1 |\rho|f_1\ra,\\
\mT_g(|\psi\ra)&=&1-|\la f_1 |\psi\ra|^2,
\eea
and for a mixed state
\be\label{ge1}
\mT_g(\rho)=\min_{\{p_i,|\psi_i\ra\}}\sum_ip_i\mT_g(|\psi_i\ra),
\ee
where the minimization is taken over all pure-state decompositions $\{p_i, |\psi_i\ra\}$ of $\rho$.
It was proven that~\cite{Wang2025pra,Zhang2025pla} 
\bea\label{T_g-F}
\mT_g(\rho)=\mT_{\Fi}(\rho)=1-\la f_1|\rho|f_1\ra\geqslant&[\mT_{tr}(\rho)]^2.
\eea
Indeed, $\mT_g$ is a special case of the convex-roof extended measure~\cite{Cui2025arvix}. Let $h:[0,1]\rightarrow[0, +\infty)$ be a function that satisfies (i) $h(1)=0$, 
(ii) $h$ is monotonic decreasing, and (iii) $h$ is concave, i.e., $h(\lambda {x}+(1-\lambda)
{y})\geqslant\lambda h({x})+(1-\lambda)h({y})$
for all $\lambda\in[0,1]$ and all $x,y\in [0,1]$.
For any pure state $|\psi\lr\psi|$, let~\cite{Cui2025arvix}
\be\label{fun1}
{\mT}(|\psi\ra)=h(|\la f_1|\psi\ra|^{2})
\ee
and
\be\label{fun2}
{\mT}_F(\rho)=\min_{p_i,|\psi_i\ra}\{\sum_ip_i{\mT}(|\psi_i\ra):\ \rho=\sum\limits_i
p_i|\psi_i\lr\psi_i|\}
\ee	
for any mixed state \( \rho \). $\mT_F$ is called the convex-roof extension of $\mT$. Taking $h(x)=1-x$, the corresponding convex-roof extended measure is just $\mT_g$.
It was shown in Ref.~\cite{Cui2025arvix} that, for any function $h:[0,1]\rightarrow[0, +\infty)$ satisfying items (i)-(iii), $\mT_F$ constitutes a valid texture monotone [for any pure state $|\psi\ra$, we let $\mT_F(|\psi\ra)=\mT(|\psi\ra)$].

The weight of quantum state texture is defined as~\cite{Zhang2025pla}
\be\label{weight}
\mT_w(\rho)=\min\limits_{\tau} \left\{s\geqslant0\,\Big|\,\rho=(1-s)\tau_f +s\tau, \tau\in\mS\right\},
\ee
where the minimum is taken over all possible $\tau$ such that $\rho=(1-s)\tau_f +s\tau$ with $0\leqslant s\leqslant 1$.
It was proven that~\cite{Zhang2025pla}
\be 
\mT_w(\rho)\geqslant\mT_{\Fi}(\rho).
\ee
Robustness measure~\cite{Wang2025pra} 
\be
\mT_{rob}(\rho)=\min_{\sigma}\left\{s\geqslant 0\,\bigg|\,\frac{\rho+s\sigma}{1+s}=\tau_f \right\},
\ee
where the minimization runs over all quantum states $\sigma$ for which the convex mixture of $\rho$ and $\sigma$ yields a textureless state. For most quantum states, both $\mT_r$ and $\mT_{rob}$ are unbounded~\cite{Wang2025pra}. 

Very recently, Muthuganesan proposed the
WYD skew information based texture quantifier~\cite{Muthuganesan2026pla}
\bea 
\mT_{\alpha}^{\mathrm{skew}}(\rho) &= &\tr(\rho\tau_f^2)-\tr(\rho^\alpha\tau_f\rho^{1-\alpha}\tau_f)\nonumber\\
&=&\la f_1|\rho|f_1\ra-\la f_1|\rho^\alpha|f_1 \ra \la f_1|\rho^{1-\alpha}|f_1\ra,
\eea
where $0<\alpha<1$.
But $\mT_{\alpha}^{\mathrm{skew}}$ fails to satisfy monotonicity condition~\cite{Muthuganesan2026pla}. It only satisfies (T2) with CPTP $\mE$ satisfying $\mE^\dagger(\tau_f)=\tau_f$.


\section{State transformations via texture-free operations}\label{state}


We now review the upper bound of the success probability of the state transformations under free operations, within the general framework of quantum resource proposed in~\cite{Wu2021pra}. 
A valid resource measure $R$ is usually required to be nonincreasing under free operations. 
When stochastic transformations are considered, one often imposes the strong monotonicity. Suppose that a free operation acts on an initial state $\rho$ and outputs an ensemble 
$\{p_j,\sigma_j\}$, where
\beax
\sigma_j=\frac{K_j\rho K_j^\dagger}{p_j},\quad 
p_j=\tr(K_j\rho K_j^\dagger),
\eeax
and $\{K_j\}$ are free Kraus operators of the free operation. $R$ is called a strong resource monotone~\cite{explain} if
\bea\label{monotone}
R(\rho)\geqslant \sum_jp_j R(\sigma_j)
\eea
as item (T4) and $R$ is convex additionally, i.e.,
\bea\label{convex}
R\left(\sum_j p_j\rho_j\right)\leqslant \sum_j p_jR(\rho_j)
\eea
for any ensemble $\{p_j, \rho_j\}$. Clearly, if a non-negative function $R$ satisfies items~\eqref{monotone} and~\eqref{convex}, it is non-increasing under any free operation. {For a unified description, throughout this paper we refer to a measure satisfying non-negativity, monotonicity, strong monotonicity and convexity as a resource monotone. These two properties also lead to a general upper bound on the success probability of resource-state conversion under the trace-nonincreasing free operations~\cite{Wu2021pra}:
\be\label{tra1}
P(\rho\to\sigma)\leqslant \min\left\{\frac{R(\rho)}{R(\sigma)},\,1\right\}
\ee
for any associated resource monotone $R$.
Here, the success probability (also called the maximal probability in Ref.~\cite{Wu2021pra}) for converting $\rho$ into $\sigma$ is defined by
\[
P(\rho \to \sigma) = \max_{\{K_j\}} \left\{ \sum_j p_j : \sigma = \frac{\sum_j K_j \rho K_j^\dagger}{\sum_j p_j} \right\}
\]
with probabilities $p_j = \operatorname{Tr}(K_j \rho K_j^\dagger)$, where the maximum is taken over all trace-nonincreasing free operations, and the corresponding Kraus operators are denoted by $\{K_j\}$. Here, with some abuse of terminology, we say a completely positive map $\mE$ on $\mS$ is a trace-nonincreasing free operation, if the associated Kraus operators admit $\sum_jK_j^\dag K_j\leq I$. The existence of free operation $\mE$ such that $\mE(\rho)=\sigma$ implies that $P(\rho \to \sigma) = 1$.

\subsection{The transformation probability from pure states to mixed states}\label{trans}	

In the context of imaginarity, the optimal probability for a pure-state conversion $|\psi\rangle \to |\phi\rangle$ via real operations is known to be~\cite{Wu2021pra}
\[
P(|\psi\rangle \to |\phi\rangle) = \min \left\{ \frac{1 - |\langle\psi^*|\psi\rangle|}{1 - |\langle\phi^*|\phi\rangle|}, 1 \right\},
\]
where $|\psi^*\ra$ denotes the complex conjugation of $|\psi\ra$ in the given reference basis of imaginarity. We will show below that a parallel result applies to texture, and moreover, this framework admits further refinement.

\begin{theorem}\label{th1}
The optimal probability with which $|\psi\ra$ can be transformed into $\rho$ via texture-free operations is given by
\be\label{tra2}
P(|\psi\ra\to\rho)=\min\left\{\dfrac{\mT_g(|\psi\ra)}{\mT_g(\rho)}, \, 1\right\}.
\ee		
\end{theorem}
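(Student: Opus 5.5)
The upper bound comes from the general inequality~\eqref{tra1}, and the matching lower bound from an explicit construction. Since $\mT_g$ is the convex-roof extension~\eqref{fun2} with $h(x)=1-x$, which satisfies (i)--(iii), the result of Ref.~\cite{Cui2025arvix} makes $\mT_g$ a texture monotone: it is strongly monotone and convex. Hence~\eqref{tra1} with $R=\mT_g$ gives $P(|\psi\ra\to\rho)\leqslant\min\{\mT_g(|\psi\ra)/\mT_g(\rho),1\}$. The case $\rho=\tau_f$ is trivial, since the replacement channel $X\mapsto \tr(X)\tau_f$ is free (Kraus operators $|f_1\ra\la k|$, and each maps $|f_1\ra$ to a multiple of $|f_1\ra$). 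So we assume $\mT_g(\rho)>0$ and write $a=|\la f_1|\psi\ra|^2$ and $b=\la f_1|\rho|f_1\ra$, so that by~\eqref{T_g-F} $\mT_g(|\psi\ra)=1-a$ and $\mT_g(\rho)=1-b$.

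For achievability we reduce to a pure-to-pure problem.

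\emph{Step 1 (equal-overlap decomposition).} We show that $\rho$ has a pure-state decomposition $\rho=\sum_i p_i|\phi_i\lr\phi_i|$ with $|\la f_1|\phi_i\ra|^2=b$ for every $i$. Start from a spectral decomposition $\rho=\sum_k|\tilde v_k\lr\tilde v_k|$ into subnormalized vectors. Any other decomposition has the form $|\tilde\phi_i\ra=\sum_k U_{ik}|\tilde v_k\ra$ with $U$ an isometry, so $\la f_1|\tilde\phi_i\ra=\sum_k U_{ik}\la f_1|\tilde v_k\ra$. Choosing $U$ suitably (for instance a Fourier-type unitary combined with an appropriate reshuffling, in the spirit of the equal-concurrence decompositions in entanglement theory) makes the ratio $|\la f_1|\tilde\phi_i\ra|^2/\la\tilde\phi_i|\tilde\phi_i\ra$ the same for all $i$. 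That common value is then necessarily $b$.

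\emph{Step 2 (mixing by free unitaries).} The unitaries $V$ with $V|f_1\ra=|f_1\ra$ act transitively on pure states with a prescribed value of $|\la f_1|\cdot\ra|$ (adjust the phase, then rotate the orthogonal component inside $|f_1\ra^\perp$). Hence, given any pure $|\phi\ra$ with $|\la f_1|\phi\ra|^2=b$, there are free unitaries $V_i$ with $V_i|\phi\ra\propto|\phi_i\ra$. The mixed-unitary channel $X\mapsto\sum_ip_iV_iXV_i^\dagger$ is then texture-free and maps $|\phi\ra$ to $\rho$. This reduces the theorem to the pure-state claim $P(|\psi\ra\to|\phi\ra)\geqslant\min\{(1-a)/(1-b),1\}$.

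\emph{Step 3 (pure-to-pure; main obstacle).} By Step 2 we may work in the two-dimensional span of $|f_1\ra$ and a unit vector $|f^\perp\ra\perp|f_1\ra$, with $|\psi\ra=\sqrt a|f_1\ra+\sqrt{1-a}|f^\perp\ra$ and $|\phi\ra=\sqrt b|f_1\ra+\sqrt{1-b}|f^\perp\ra$; the complement of this span is handled by adding free Kraus operators mapping it to $|f_1\ra$.

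\begin{itemize}
\item \emph{Case $a\leqslant b$ (target deterministically).} I would use the Kraus pair $K_1=\alpha|f_1\lr f_1|+\beta|f^\perp\lr f^\perp|$ and $K_2=\gamma|f_1\lr f^\perp|+\delta|f^\perp\lr f^\perp|$, plus the analogous operators on the complement; each satisfies $K_j|f_1\ra\propto|f_1\ra$. The parameters should be chosen so that both branches output $|\phi\ra$ up to a free unitary, and so that $\sum_jK_j^\dagger K_j=I$.
\item \emph{Case $a>b$ (success probability $p=(1-a)/(1-b)$).} Take the single Kraus operator $K=\sqrt{p}\,|f_1\lr f_1|+\mu|f^\perp\lr f^\perp|$ with $\mu=\sqrt{pa(1-b)/(b(1-a))}$. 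This operator is trace-nonincreasing: when $p\leqslant1$ one checks that $\mu=1$ exactly (use $(1-a)a(1-b)=(1-b)b(1-a)\cdot a/b\cdot(1-a)/(1-a)$, i.e.\ verify $p\,a(1-b)=b(1-a)$). It satisfies $K|\psi\ra\propto|\phi\ra$ with $\|K|\psi\ra\|^2=p$.
\end{itemize}

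I expect the main work to lie in verifying the parameter constraints in Step 3 and in the existence claim of Step 1. If the direct construction in Step 1 fails, I would instead split an arbitrary decomposition of $\rho$ into pieces with overlaps above and below $b$. Each piece would then be reached by its own branch of the construction, with the probabilities balanced using the linear dependence of $\mT_g$ on $\la f_1|\cdot|f_1\ra$.
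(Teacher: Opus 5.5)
\textbf{Where the proposal breaks.} Your route is sound and differs from the paper's, but the one construction carrying the nontrivial half of the theorem is wrong as written. In Step~3, case $a>b$, you have $p=(1-a)/(1-b)$. Hence
\[
\mu^2=\frac{p\,a(1-b)}{b(1-a)}=\frac{a}{b}>1,
\]
not $\mu=1$. The identity you propose to verify, $p\,a(1-b)=b(1-a)$, reads $a(1-a)=b(1-a)$, which fails whenever $a>b$. Consequently
\[
K^\dagger K=p\,|f_1\rangle\langle f_1|+\frac{a}{b}\,|f^\perp\rangle\langle f^\perp|,
\]
which is not $\leqslant I$. Moreover $\|K|\psi\rangle\|^2=pa+(a/b)(1-a)=pa/b$, not $p$. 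A sanity check exposes this immediately: a success probability $pa/b>p$ would beat the upper bound you had just derived from \eqref{tra1}. So achievability is unproven exactly in the regime $\mT_g(|\psi\rangle)<\mT_g(\rho)$, the only regime where the formula is nontrivial.

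\textbf{The fix.} Rescale your operator by $\sqrt{b/a}$, i.e.\ take $K=\sqrt{pb/a}\,|f_1\rangle\langle f_1|+|f^\perp\rangle\langle f^\perp|$. Then $K|\psi\rangle=\sqrt{pb}\,|f_1\rangle+\sqrt{1-a}\,|f^\perp\rangle=\sqrt{p}\,|\phi\rangle$, because $1-a=p(1-b)$. Also $K^\dagger K\leqslant I$, since $pb/a=b(1-a)/[a(1-b)]<1$. This is the paper's Case-2 operator specialized to a single target.

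\textbf{The remaining steps.} With that repair the rest goes through, though two steps need filling in.

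Step~1 is true, but the justification should be as follows. Let $M_{kl}=\langle\tilde v_k|(|f_1\rangle\langle f_1|-bI)|\tilde v_l\rangle$. This is a traceless Hermitian matrix, since its trace is $\langle f_1|\rho|f_1\rangle-b=0$. Hence it is unitarily similar to a matrix with zero diagonal: diagonalize it, then conjugate by the discrete Fourier matrix. Zero diagonal is exactly the condition $|\langle f_1|\tilde\phi_i\rangle|^2=b\,\langle\tilde\phi_i|\tilde\phi_i\rangle$.

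In the case $a\leqslant b$, your ansatz works with $\alpha=1$, $\beta^2=a(1-b)/[b(1-a)]\leqslant 1$, $\gamma^2=b(1-\beta^2)$ and $\delta^2=(1-b)(1-\beta^2)$. Then $K_1|\psi\rangle=\sqrt{a/b}\,|\phi\rangle$ and $K_2|\psi\rangle=\sqrt{(1-a)(1-\beta^2)}\,|\phi\rangle$, and completeness on the span holds. For $b=0$ one has $a=0$, and a free unitary suffices.

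\textbf{Comparison with the paper.} The paper never needs an equal-overlap decomposition. It takes an arbitrary decomposition $\{q_i,|\phi_i\rangle\}$ of $\rho$ and writes one Kraus operator per element. Each sends $|\psi\rangle$ to $\sqrt{pq_i}\,|\phi_i\rangle$, or in the deterministic case to $\sqrt{q_i}\,|\phi_i\rangle$ together with a completing $K_0$ into $|f_1\rangle$. This also directly yields Corollary~\ref{cor3}, which is used later for the texture cost. Your reduction instead shows that every mixed target is a free mixed-unitary image of a single pure state with the same overlap. The problem then collapses to a two-dimensional pure-to-pure one, which is conceptually cleaner, at the price of the extra decomposition lemma.
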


\begin{proof}
Let
$\rho=\sum_iq_i|\phi_i\lr\phi_i|$
for some ensemble $\{q_i, |\phi_i\ra\}$ and let
\beax p=\min\left\{ \dfrac{\mT_g(|\psi\ra)}{\mT_g(\rho)},\, 1 \right\}, ~
\alpha=\la f_1|\psi\ra,~ x=1-|\alpha|^2,
\eeax
then 
\beax 
|\psi\ra=\alpha|f_1\ra+\sqrt{x}|u\ra
\eeax
for some $|u\ra$ with $\la f_1|u\ra=0$.
For each $|\phi_i\ra$, we write
$\beta_i=\la f_1|\phi_i\ra$ and $y_i=1-|\beta_i|^2$,
then
\beax
|\phi_i\ra= \beta_i|f_1\ra + \sqrt{y_i}|v_i\ra
\eeax
for some $|v_i\ra$'s with $\la f_1|v_i\ra=0$.
It follows from Eq.~\eqref{T_g-F} that
\beax
t:= T_g(\rho)=1-\la f_1|\rho|f_1\ra=\sum_iq_iy_i.
\eeax

\textit{Case 1: $x = 1$.}
In this case, $|\psi\ra=|u\ra$. We consider the representation of $|\psi\ra$ in the basis of $\{|f_1\ra, |f_2\ra, |f_3 \ra \cdots |f_d\ra\}$, i.e., $|\psi\ra=\sum_{i=1}^d c_i|f_i\ra$. Here, the states $\{|f_k\ra\}_{k=1}^d$  are chosen as the Fourier states introduced in Ref.~\cite{Parisio2024prl}. $|\psi\ra=\sum_{i=2}^d c_i|f_i\ra$ indeed since $\la f_1|\psi\ra=0$.
Note again that $|f_1\ra$ is orthogonal to $|u\ra (=|\psi\ra)$, we let $\{|f_1\ra, |\psi\ra, |e_3\ra \cdots |e_d\ra\}$ be an orthonormal basis of the state space $\mH$.
Let
\be
U=|f_1\lr f_1|+|f_2\lr\psi|+\sum_{i=3}^d|f_i\lr e_i|.
\ee
Then $U|f_1\ra=|f_1\ra$. So $\mE_U(\cdot) = U(\cdot)U^\dagger$ is a free operation, and meanwhile, $U|\psi\ra=|f_2\ra$. Thus, for the target state $\rho$, one can proceed as follows. First, we apply the texture-free operation $\mE_U$ to transform $|\psi\ra$ into $|f_2\ra$.
Then, by the texture-free operation in Ref.~\cite{Parisio2024prl}, $|f_2\ra$ can be transformed into $\rho$. Namely, when $x=1$,
$P(|\psi\ra\to\rho)=1$.

\textit{Case 2: $x<1$ and $x < t$.} 
Here, $p=\dfrac{x}{t}$. We take
\be
K_i=\sqrt{p q_i}\dfrac{\beta_i}{\alpha}|f_1\lr f_1|+\sqrt{\dfrac{pq_iy_i}{x}} \vert v_i\lr u|,
\ee
which reveals		
\beax
&&K_i |\psi\ra\\
&&=\left(\sqrt{pq_i}\frac{\beta_i}{\alpha}|f_1\lr f_1| + \sqrt{\frac{pq_iy_i}{x}} |v_i\lr u|\right) \left( \alpha |f_1\ra+\sqrt{x}|u\ra\right) \\	
&&=\sqrt{pq_i}\beta_i|f_1\ra+\sqrt{pq_iy_i}|u\ra\\
&&=\sqrt{pq_i}|\phi_i\ra.
\eeax
That is,
\beax
\sum_iK_i\vert\psi\lr\psi|K_i^\dagger
=p\sum_iq_i \vert\phi_i\lr\phi_i|
=p\rho.
\eeax
Thus, the success probability of this trace-nonincreasing operation is
\beax
\tr\left( \sum_iK_i\vert\psi\lr\psi|K_i^\dagger\right)=p,
\eeax
where
\beax
\sum_iK_i^\dagger K_i&=&\frac{p(1-t)}{1-x}|f_1\lr f_1|+\frac{p t}{x}|u\lr u|\\
&=&\frac{x(1-t)}{t(1-x)} |f_1\lr f_1| + |u\lr u| \leqslant I
\eeax
since $x<t$ and $p=x/t$.
Thus, when $x<t$, the optimal probability is
\beax
P(|\psi\ra\to\rho)\geqslant\frac{\mT_g(|\psi\ra)}{\mT_g(\rho)}.
\eeax

\textit{Case 3: $t\leqslant x<1$.}
Let
\be
K_i=a_i|f_1\lr f_1|+\frac{\sqrt{q_i}|\phi_i\ra-\alpha a_i|f_1\ra}{\sqrt{x}} \la u|
\ee
with
$a_i=\frac{\alpha^*}{1-t}\sqrt{q_i}\beta_i$.
Then
\beax
K_i|\psi\ra=\sqrt{q_i}|\phi_i\ra
\eeax
which reveals
\beax
\sum_iK_i|\psi\lr\psi|K_i^\dagger
=\sum_iq_i |\phi_i\lr\phi_i|
=\rho.
\eeax
Let
\be
K_0=a_0|f_1\lr f_1|-\frac{\alpha a_0}{\sqrt{x}}\ket{f_1}\bra{u}+\sum_{j=3}^{d}|e_j\lr e_j|
\ee
with
\beax
a_0=\sqrt{1-\sum_{i}|a_i|^2}=\sqrt{\frac{x-t}{1-t}}.
\eeax
We have
$K_0^\dagger K_0+\sum_iK_i^\dagger K_i=I$,
$\sum_iK_i|\psi\lr\psi|K_i^\dagger=\rho$,
and $K_i|f_1\ra\propto|f_1\ra$ for each $i$. In particular, for the case of $d = 2$, $K_0 = a_0|f_1\lr f_1|-\dfrac{\alpha a_0}{\sqrt{x}} \ket{f_1}\bra{u}$.

Through the above discussion, we conclude that
\beax
P(|\psi\ra\to\rho)\geqslant\min\left\{\dfrac{\mT_g(|\psi\ra)}{\mT_g(\rho)}, \, 1\right\}.
\eeax 
Together with Eq.~\eqref{tra1}, the proof is completed.
\end{proof}

From the arguments above, the following are straightforward.	

\begin{cor}\label{cor2}
The maximum probability for a pure state transformation $|\psi\ra\to|\phi\ra$ via texture-free operations is
\be
P(|\psi\ra\to|\phi\ra)=\min\left\{\dfrac{\mT_g(|\psi\ra)}{\mT_g(|\phi\ra)},\,1\right\}.
\ee	
\end{cor}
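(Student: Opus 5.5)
The corollary is the special case $\rho=|\phi\lr\phi|$ of Theorem~\ref{th1}, so I will obtain it by specialization and then check that every case of that proof survives when the target is pure. For a pure target, Eq.~\eqref{T_g-F} gives $\mT_g(|\phi\lr\phi|)=1-|\la f_1|\phi\ra|^2$, which coincides with the pure-state definition $\mT_g(|\phi\ra)$. Hence the right-hand side of Eq.~\eqref{tra2} becomes $\min\{\mT_g(|\psi\ra)/\mT_g(|\phi\ra),1\}$. The upper bound $P\leqslant\min\{\cdot,1\}$ then follows directly from Eq.~\eqref{tra1}, since $\mT_g$ is a convex-roof measure with $h(x)=1-x$, which satisfies (i)--(iii). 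By Ref.~\cite{Cui2025arvix}, this makes $\mT_g$ a texture monotone that is convex and strongly monotone.

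For achievability, I will rerun the three cases of the proof of Theorem~\ref{th1} using the trivial ensemble $\{q_1=1,|\phi_1\ra=|\phi\ra\}$. With $\beta=\la f_1|\phi\ra$ and $y=1-|\beta|^2=t$, the cases are as follows.
\begin{itemize}
\item \emph{Case $x=1$:} the unitary $U$ sends $|\psi\ra$ to $|f_2\ra$. Parisio's free operation then maps $|f_2\ra$ to any target state, including $|\phi\ra$.
\item \emph{Case $x<t$:} the single Kraus operator $K_1=\sqrt{p}\frac{\beta}{\alpha}|f_1\lr f_1|+\sqrt{py/x}\,|v\lr u|$ gives $K_1|\psi\ra=\sqrt p|\phi\ra$ with $p=x/t$. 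The bound $K_1^\dagger K_1\leqslant I$ was already verified in the theorem's proof.
\item \emph{Case $t\leqslant x<1$:} the operators $K_1$ and $K_0$ from Case 3 give a trace-preserving free operation whose outcome-$1$ branch yields $|\phi\ra$ exactly.
\end{itemize}

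The only point needing care is whether the target's output can be mixed when a pure target is required. This does not happen here: the pure target is produced in the branch $K_1$, and $K_0$ maps $|\psi\ra$ to $a_0(\alpha-\alpha x/x)|f_1\ra$. Since $\alpha-\alpha x/x=0$, the vector $K_0|\psi\ra$ vanishes and the outcome-$0$ branch contributes nothing. The output is therefore exactly $|\phi\lr\phi|$, which completes the plan.
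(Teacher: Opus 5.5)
Your proposal is correct and is essentially the paper's argument: the corollary is Theorem~\ref{th1} with $\rho=|\phi\lr\phi|$, with the upper bound coming from Eq.~\eqref{tra1} applied to $\mT_g$ and achievability coming from the three Kraus constructions applied to the one-element ensemble. Your extra check that $K_0|\psi\ra=0$ in the case $t\leqslant x<1$ is valid (the cancellation is $\alpha a_0\sqrt{x}/\sqrt{x}$ rather than $\alpha x/x$, but the conclusion is the same), though it is not strictly needed, since Theorem~\ref{th1} already outputs exactly the target $\rho$.
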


\begin{cor}\label{cor3}
For any ensemble $\{p_j,|\phi_j\ra\}_{j=1}^m$ of $\rho$, there exists a pure state
\bea\label{cor3-}
|\psi\ra=\sqrt{1-t}\,|f_1\ra+\sqrt{t}\,|u\ra,
\quad 
\la f_1|u\ra=0
\eea
with 
\be \label{cor3--}
t=1-\sum_{j=1}^m p_j |\la f_1|\phi_j\ra|^2
\ee 
such that $|\la f_1|\psi\ra|^2=\la f_1|\rho|f_1\ra$
and a texture-free operation $\mE$ with Kraus operators
$\{K_j\}_{j=1}^m$ satisfying
\beax
K_j|\psi\ra=\sqrt{p_j}\,|\phi_j\ra,
\qquad j=1,\dots,m .
\eeax
\end{cor}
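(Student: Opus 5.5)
Corollary~\ref{cor3} is the boundary case $x=t$ of Case~3 in the proof of Theorem~\ref{th1}, so I will reuse that construction.

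Fix the ensemble $\{p_j,|\phi_j\ra\}_{j=1}^m$ and write $\beta_j=\la f_1|\phi_j\ra$, $y_j=1-|\beta_j|^2$. Then $t=\sum_jp_jy_j=1-\la f_1|\rho|f_1\ra\in[0,1]$.

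\emph{The state.} If $t=0$, every $|\phi_j\ra$ is proportional to $|f_1\ra$. I take $|\psi\ra=|f_1\ra$ and $K_j=\sqrt{p_j}\,\beta_j|f_1\lr f_1|+\sqrt{p_j}\,P_\perp$, where $P_\perp=I-|f_1\lr f_1|$. Then $K_j|f_1\ra=\sqrt{p_j}\,\beta_j|f_1\ra=\sqrt{p_j}\,|\phi_j\ra$, and $\sum_jK_j^\dagger K_j=I$ because $\sum_jp_j|\beta_j|^2=1$.

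For $t>0$, I take any unit $|u\ra\perp|f_1\ra$ and set $|\psi\ra$ as in Eq.~\eqref{cor3-}. Then $\alpha=\la f_1|\psi\ra=\sqrt{1-t}$, $x=t$, and $|\la f_1|\psi\ra|^2=1-t=\la f_1|\rho|f_1\ra$ is immediate.

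\emph{The Kraus operators.} Following Case~3, I define
\[
K_j=a_j|f_1\lr f_1|+\frac{\sqrt{p_j}|\phi_j\ra-\alpha a_j|f_1\ra}{\sqrt{t}}\la u|,\qquad a_j=\frac{\alpha\sqrt{p_j}\beta_j}{1-t}=\frac{\sqrt{p_j}\,\beta_j}{\sqrt{1-t}}.
\]
Each $K_j$ sends $|f_1\ra$ to $a_j|f_1\ra$, so $K_j|f_1\ra\propto|f_1\ra$. Expanding $K_j|\psi\ra$ gives $\alpha a_j|f_1\ra+\sqrt{p_j}|\phi_j\ra-\alpha a_j|f_1\ra=\sqrt{p_j}|\phi_j\ra$. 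If $t=1$, then $\alpha=0$ and all $\beta_j=0$, so I set $a_j=0$; this is Case~1 in spirit.

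\emph{Completeness (main obstacle).} The step needing care is checking $\sum_jK_j^\dagger K_j\leqslant I$ and completing it to a trace-preserving map without adding a Kraus operator. Case~3 uses an extra $K_0$, but here $a_0^2=(x-t)/(1-t)=0$, so $K_0$ reduces to the projection onto $\mathrm{span}\{|e_3\ra,\dots,|e_d\ra\}$.

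I compute $\sum_jK_j^\dagger K_j$ in the basis $\{|f_1\ra,|u\ra\}$:
\begin{itemize}
\item The $|f_1\lr f_1|$ entry is $\sum_j|a_j|^2=\sum_jp_j|\beta_j|^2/(1-t)=1$.
\item The $|u\lr u|$ entry is $\frac{1}{t}\sum_j\bigl\|\sqrt{p_j}|\phi_j\ra-\alpha a_j|f_1\ra\bigr\|^2=\frac{1}{t}\sum_jp_jy_j=1$, using $\alpha a_j=\sqrt{p_j}\beta_j$.
\item The off-diagonal entry is $\sum_ja_j^*\la f_1|(\sqrt{p_j}|\phi_j\ra-\alpha a_j|f_1\ra)/\sqrt{t}=0$.
\end{itemize}
So $\sum_jK_j^\dagger K_j$ is the projection onto $\mathrm{span}\{|f_1\ra,|u\ra\}$.

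To reach exactly $m$ operators, I add $\frac{1}{\sqrt m}Q$ to each $K_j$, where $Q$ is the projection onto the orthogonal complement of $\mathrm{span}\{|f_1\ra,|u\ra\}$. Alternatively, I add $\sqrt{p_j}\,Q$, which preserves completeness because $\sum_jp_j=1$. Adding $Q$ changes neither $K_j|\psi\ra$ nor $K_j|f_1\ra$. The cross terms vanish since the range of $K_j$ restricted to $\mathrm{span}\{|f_1\ra,|u\ra\}$ may overlap the range of $Q$, but $Q K_j=0$ does not hold in general. So I will instead verify $\sum_j(K_j+c_jQ)^\dagger(K_j+c_jQ)=I$ directly: the cross terms $c_jK_j^\dagger Q$ have the form $|u\ra(\cdot)Q$, and they sum to $\frac{1}{\sqrt t}|u\ra\bigl(\sum_jc_j\sqrt{p_j}\la\phi_j|\bigr)Q$. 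With $c_j=\sqrt{p_j}$ this is $\frac{1}{\sqrt t}|u\ra\bigl(\sum_j p_j\la\phi_j|Q\bigr)$, which need not vanish.

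The fallback, which I expect suffices, is to append the single extra Kraus operator $Q$ (as $K_0$ is appended in Theorem~\ref{th1}), noting that $Q|f_1\ra=0\propto|f_1\ra$ and $Q|\psi\ra=0$. The outputs on $|\psi\ra$ are then exactly the required $\sqrt{p_j}|\phi_j\ra$, $j=1,\dots,m$.
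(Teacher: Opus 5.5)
Your route is the paper's: Corollary~\ref{cor3} comes from Case~3 of Theorem~\ref{th1} with $x=t$. There $a_0=0$, so the paper's $K_0$ collapses to $\sum_{j\geqslant3}|e_j\lr e_j|$, which is your projection $Q$ (and is $0$ when $d=2$). Your check that $\sum_jK_j^\dagger K_j=|f_1\lr f_1|+|u\lr u|$ is correct for $0<t<1$. Your separate treatment of $t=0$ is cleaner than the paper's, whose Case~3 would divide by $\sqrt{x}=0$.

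Two points need fixing.

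\textbf{The case $t=1$.} Here your choice $a_j=0$ breaks completeness. The $|f_1\lr f_1|$ entry is $\sum_j|a_j|^2=0$, not $1$, because the expression $\sum_jp_j|\beta_j|^2/(1-t)$ is $0/0$ at $t=1$. So $\sum_jK_j^\dagger K_j+Q=I-|f_1\lr f_1|$, and the map is not trace preserving. Take $a_j=\sqrt{p_j}$ instead. Since $\alpha=0$ and all $\beta_j=0$:
\begin{itemize}
\item the off-diagonal entry $\sum_ja_j^*\sqrt{p_j}\beta_j$ still vanishes;
\item $K_j|u\ra=\sqrt{p_j}|\phi_j\ra$ is unchanged;
\item the $|f_1\lr f_1|$ entry becomes $1$.
\end{itemize}

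\textbf{Exactly $m$ Kraus operators.} Your fallback gives $m+1$ Kraus operators, while the statement asks for $\{K_j\}_{j=1}^m$. The paper's own construction has the same extra operator $K_0$ for $d\geqslant3$, but exactly $m$ is attainable. Your attempt with scalar multiples $c_jQ$ fails for the reason you found. The sentence claiming the cross terms vanish contradicts itself and should be deleted.

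A non-scalar completion works. Let $|w_1\ra=K_1|u\ra$, which is orthogonal to $|f_1\ra$. Choose an isometry $W$ from $Q\mH$ into $\{|f_1\ra,|w_1\ra\}^\perp$, which has dimension at least $d-2$. Replace $K_1$ by $K_1+WQ$.
\begin{itemize}
\item Since $K_1^\dagger=a_1^*|f_1\lr f_1|+|u\ra\la w_1|$ annihilates the range of $W$, the cross terms vanish and $\sum_jK_j^\dagger K_j=I$.
\item $K_1|f_1\ra$ and $K_1|\psi\ra$ are unchanged, because $Q|f_1\ra=Q|\psi\ra=0$.
\end{itemize}
This completion works uniformly for all $0<t\leqslant1$ once the $t=1$ choice of $a_j$ above is used.
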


\subsection{Deterministic transformations for qubit states}\label{deter}

In this section, we consider the conversion of qubit states under texture-free operations.  Recall that every qubit state $\rho$ admits the Bloch representation, i.e.,
\bea\label{bloch}
\rho=\frac{1}{2}(I+\bf{r}\cdot{\bm{\sigma}}),
\eea
where $\bm{\sigma}=(\sigma_{x}, \sigma_{y}, \sigma_{z})$ is the vector of Pauli operators, ${\bf{r}}=(r_x, r_y, r_z)$ is the Bloch vector of $\rho$, $\|\bf{r}\|\leqslant 1$. Analogous to the imaginarity transformation studied in Ref.~\cite{Wu2021pra}, we derive necessary and sufficient conditions for deterministic transformations-not only from pure state to any other state, but also from mixed state to arbitrary target state.

\begin{table*}[ht]
\caption{Comparison of conditions for deterministic transformations $\rho \rightarrow \sigma$ in the resource theory of imaginarity, coherence and texture, where $\mathbf{r}=(r_x,r_y,r_z)$ and $\mathbf{s}=(s_x,s_y,s_z)$ are the Bloch vectors of the $\rho$ and $\sigma$, respectively.}
\label{tab:2}
\begin{ruledtabular}
\begin{tabular}{lll}
	Resource
	& Conditions for deterministic transformation $\rho \rightarrow \sigma$&Reference\\
	\midrule
	Texture &  $s_x\geqslant r_x$ 
	~\text{and}~$(1 - r_x)(1 - |\mathbf{s}|^2) \geqslant (1 - s_x)(1 - |\mathbf{r}|^2)$ & Theorem~\ref{th4}
	\\
	Coherence &  $s_x^2 + s_y^2 \leqslant r_x^2 + r_y^2$ and 
	$s_z^2\leqslant1-\frac{1-r_z^2}{r_x^2+r_y^2}\bigl(s_x^2 + s_y^2\bigr). $ &\cite{Shi2017sr} \\
	Imaginarity &  $s_y^2 \leqslant r_y^2$ and
	$\frac{1-s_y^2-s_x^2}{s_y^2}
	\geqslant
	\frac{1-r_y^2-r_x^2}{r_y^2}$ &\cite{Wu2021pra}
\end{tabular}%
\end{ruledtabular}
\end{table*}

\begin{theorem}\label{th4}
For any qubit states $\rho$ and $\sigma$, the transition $\rho\to\sigma$ is possible via texture-free operations if and only if
\be\label{qu1}
s_x\geqslant r_x 
\ee
and
\be\label{qu2}
(1 - r_x)(1 - |\mathbf{s}|^2) \geqslant (1 - s_x)(1 - |\mathbf{r}|^2),
\ee
where $\mathbf{r}=(r_x,r_y,r_z)$ and $\mathbf{s}=(s_x,s_y,s_z)$ are the Bloch vectors of the initial and the target state, respectively, $|\mathbf{r}|^2 = r_x^2+r_y^2+r_z^2$, $|\mathbf{s}|^2=s_x^2+s_y^2+s_z^2$.	
\end{theorem}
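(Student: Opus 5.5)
The plan is to pass to the Fourier basis $\{|f_1\ra,|f_2\ra\}=\{|+\ra,|-\ra\}$. There $\la f_1|\rho|f_1\ra=(1+r_x)/2$, $\la f_2|\rho|f_2\ra=(1-r_x)/2$ and $\det\rho=(1-|\mathbf r|^2)/4$. Both conditions then become statements about two functionals, and we prove each is monotone under texture-free operations (necessity). Sufficiency reduces to Theorem~\ref{th1} through a convex decomposition of $\rho$ and $\sigma$.

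\emph{Necessity.} The first condition \eqref{qu1} is monotonicity of $\mT_g$. By Eq.~\eqref{T_g-F}, $\mT_g(\rho)=1-\la f_1|\rho|f_1\ra=(1-r_x)/2$, so $\mT_g(\sigma)\leqslant\mT_g(\rho)$ is exactly $s_x\geqslant r_x$.

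For the second condition \eqref{qu2}, assume $\rho\neq\tau_f$, so that $\la f_2|\rho|f_2\ra>0$. If $\rho=\tau_f$, then $\sigma=\tau_f$ is forced and there is nothing to prove. Define
\[
g(\rho)=\frac{\det\rho}{\la f_2|\rho|f_2\ra}=\la f_1|\rho|f_1\ra-\frac{|\la f_1|\rho|f_2\ra|^2}{\la f_2|\rho|f_2\ra},
\]
which is a Schur complement. By the Schur complement criterion,
\[
g(\rho)=\max\{\lambda\geqslant0:\ \rho-\lambda\tau_f\geqslant0\}.
\]
Condition \eqref{qu2} reads $g(\sigma)\geqslant g(\rho)$, with the convention $g(\tau_f)=1$. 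This is proved as follows. If $\rho\geqslant\lambda\tau_f$, then for any texture-free $\mE$ positivity gives $\mE(\rho)\geqslant\lambda\mE(\tau_f)=\lambda\tau_f$, hence $g(\mE(\rho))\geqslant g(\rho)$. The main conceptual step is spotting this variational form of $g$; once it is in hand the monotonicity is immediate.

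\emph{Sufficiency.} Put $\lambda=g(\rho)$ and $\mu=g(\sigma)$, so that $\mu\geqslant\lambda$. Since the Schur complement of $\rho-\lambda\tau_f$ vanishes, $\rho-\lambda\tau_f$ is positive semidefinite of rank one. Hence we can write
\[
\rho=\lambda\tau_f+(1-\lambda)|\psi\lr\psi|,\qquad \sigma=\mu\tau_f+(1-\mu)|\phi\lr\phi|
\]
for some pure states $|\psi\ra,|\phi\ra$.

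Define the target state
\[
\omega=\frac{\mu-\lambda}{1-\lambda}\,\tau_f+\frac{1-\mu}{1-\lambda}\,|\phi\lr\phi|,
\]
which is a valid state because $\mu\geqslant\lambda$. Since $\mT_g(\tau_f)=0$ and $\mT_g(\rho)$ is linear in $\rho$, we get $\mT_g(\omega)=\mT_g(\sigma)/(1-\lambda)$ and $\mT_g(\rho)=(1-\lambda)\mT_g(|\psi\ra)$. Condition \eqref{qu1} then gives $\mT_g(\omega)\leqslant\mT_g(|\psi\ra)$.

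By Theorem~\ref{th1}, in particular the explicit Cases 1 and 3 of its proof, there is a texture-free channel $\mE$ with $\mE(|\psi\lr\psi|)=\omega$. Every texture-free channel fixes $\tau_f$. By linearity,
\[
\mE(\rho)=\lambda\tau_f+(1-\lambda)\omega=\mu\tau_f+(1-\mu)|\phi\lr\phi|=\sigma.
\]

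The remaining checks are the degenerate cases $\lambda=1$ (i.e.\ $\rho=\tau_f$) and $\mu=1$. They are handled directly, since then the target is $\tau_f$, which is reachable by the replacement channel $K_j\propto|f_1\ra\la\cdot|$.
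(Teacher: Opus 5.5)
Your proof is correct, and it takes a genuinely different route from the paper's.

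The paper works at the level of Kraus operators. In the basis $\{|+\ra,|-\ra\}$ every texture-free Kraus operator is upper triangular, so the channel's action reduces to three scalars $C,D,E$. These satisfy $1-s_x=D(1-r_x)$ together with a projection (Cauchy--Schwarz) constraint $|C|^2+|E|^2/(1-D)\leqslant D$. Necessity then follows by eliminating $C$, and sufficiency from an explicit three-Kraus-operator channel built from the optimal $C_0$.

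You instead read \eqref{qu1} and \eqref{qu2} as monotonicity of two functionals: $\la f_1|\rho|f_1\ra$ and $g(\rho)=\max\{\lambda:\rho\geqslant\lambda\tau_f\}$. Monotonicity of $g$ is a one-line consequence of positivity and $\mE(\tau_f)=\tau_f$. For sufficiency you use the qubit-specific fact that $\rho-g(\rho)\tau_f$ has rank one. By linearity this reduces the problem to the deterministic pure-to-mixed channels of Cases 1 and 3 of Theorem~\ref{th1}, which are indeed trace preserving and free. Since $g(\rho)=1-\mT_w(\rho)$, your argument also shows that the theorem says exactly this: a qubit conversion is possible if and only if neither $\mT_g$ nor the weight of texture $\mT_w$ increases. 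The paper does not point out this interpretation.

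Your route buys conceptual transparency and some generality. The variational form of the necessity argument holds verbatim in every dimension. The same sufficiency argument gives a complete criterion in any $d$ for inputs of the form $\lambda\tau_f+(1-\lambda)|\psi\lr\psi|$. The paper's route buys a self-contained construction that does not depend on Theorem~\ref{th1}, with the Kraus operators written directly in terms of the Bloch parameters.
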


\begin{proof}
``$\Rightarrow$''
Let	$\mE(\rho)=\sigma$ for some texture-free operation $\mE$ and $|\pm\ra=\dfrac{1}{\sqrt{2}}(|0\ra \pm|1\ra)$. Then the Kraus operators of $\mE$ are upper triangular matrices in the basis \(\{|+\ra,|-\ra\}\), i.e.,
\beax
K_j=\begin{pmatrix}
\alpha_j&\beta_j\\
0&\delta_j
\end{pmatrix}.
\eeax	
In additon, in this basis,
\beax
\rho&=&\dfrac{1}{2}
\begin{pmatrix}
1+r_x&r_z+\rmi r_y\\
r_z-\rmi r_y&1 - r_x
\end{pmatrix}
=
\begin{pmatrix}
p&x\\
x^*&1 - p
\end{pmatrix},\\
\sigma   &=&
\begin{pmatrix}
p'&y\\
y^*&1 - p'
\end{pmatrix},
\eeax
where
\beax
p=\dfrac{1+r_x}{2},~ x = \dfrac{r_z+\rmi  r_y}{2},~
p'=\dfrac{1+s_x}{2},~ y = \dfrac{s_z+\rmi s_y}{2}.
\eeax

Note that
$\sum_jK_j^\dagger K_j=I$
is equivalent to
\bea\label{conditions}
\left\lbrace \begin{aligned}
&\sum_j|\alpha_j|^2=1,~
\sum_j\alpha_j^*\beta_j = 0,\\
&\sum_j\left(|\beta_j|^2+|\delta_j|^2\right)=1.
\end{aligned}\right. 
\eea
We define
$C= \sum_j\alpha_j\delta_j^*$,
$D= \sum_j|\delta_j|^2$,
$E= \sum_j\beta_j\delta_j^*$
for convenience.
Then $\sum_jK_j \rho K_j^\dagger=\sigma$ reveals
$1-p'=D(1-p)$ and
$y=Cx+E(1-p)$.
Switching back to the Bloch vectors, we get
\be\label{Bloch1}
{1-s_x=D(1-r_x)}
\ee
and
\be\label{Bloch2}
{s_z+\rmi s_y=C(r_z+\rmi r_y)+E(1-r_x)}.
\ee
Since $0\le D\le1$, it immediately follows that
\beax
1-s_x\le1-r_x\Rightarrow
{s_x\geqslant r_x}.
\eeax		
If $r_x=1$, then $s_x=1$. In this case, both $\rho=\sigma=\tau_f$. Taking the texture-free operation as the identity map, the two states can be transformed and they satisfy Eqs.~\eqref{qu1} and~\eqref{qu2}.

We denote by $\boldsymbol{\alpha}=(\alpha_j)_j$, $\boldsymbol{\beta}=(\beta_j)_j$, and $\boldsymbol{\delta}=(\delta_j)_j$ as complex vectors.
Then the conditions in Eq.~\eqref{conditions}		
is equivalent to
\beax
\|\boldsymbol{\alpha}\|^2=1,~\boldsymbol{\alpha}\perp\boldsymbol{\beta},~ \|\boldsymbol{\beta}\|^2=1-D,~\|\boldsymbol{\delta}\|^2=D.
\eeax

\textit{Case 1: $s_x = r_x \neq 1$.}	
By Eq.~\eqref{Bloch1}, $\sum_j|\beta_j|^2=1-D=0$, which reveals $\beta_j=0$ for each $j$ and $E=0$. This shows that $y=Cx$. 
Since the length of the projection of $\boldsymbol{\delta}$ onto the subspace spanned by $\boldsymbol{\alpha}$  cannot exceed $\|\delta\|$, we have
\beax
|C|^2\leqslant1,
\eeax
namely, 
$(s_y^2+s_z^2)\le(r_y^2+r_z^2)$ as desired.

\textit{Case 2: $s_x>r_x$.}				
Since the length of the projection of $\boldsymbol{\delta}$ onto the subspace spanned by $\boldsymbol{\alpha}$ and $\boldsymbol{\beta}$ cannot exceed $\|\delta\|$, we have
\be\label{Bloch3}
|C|^2+\dfrac{|E|^2}{1-D}\leqslant D.
\ee
Together with
\beax
E=\dfrac{(s_z+\rmi s_y)-C(r_z+\rmi r_y)}{1-r_x},
\eeax 
we obtain
\be\label{CD}
|C|^2+\dfrac{|(s_z+\rmi s_y)-C(r_z+\rmi r_y)|^2}{(1-r_x)^2(1-D)}\leqslant D.
\ee
Now we let
\be\label{notation1}
u := r_z+\rmi r_y,~ v:=s_z+\rmi s_y,~q:=1-r_x,~ q':=1-s_x.
\ee
Then, it follows from Eq.~\eqref{Bloch1} that
\beax
(1-r_x)^2(1-D)=q(q-q').
\eeax
Thus, the above inequality~\eqref{CD} becomes
\beax
|C|^2+\dfrac{|v-Cu|^2}{q(q-q')} \leqslant\dfrac{q'}{q}.
\eeax
Therefore, there exists some $C$ such that the above inequality holds if and only if
\beax
\dfrac{|v|^2}{|u|^2+q(q-q')} \leqslant\dfrac{q'}{q}.
\eeax
which is equivalent to
\beax
(1 - r_x)(1 - |\mathbf{s}|^2) \geqslant (1 - s_x)(1 - |\mathbf{r}|^2).
\eeax

``$\Leftarrow$''
\textit{Case 1: $s_x = r_x$.}
In this case, the inequality~\eqref{qu2} reduces to
\beax
s_y^2+s_z^2\leqslant r_y^2+r_z^2,
\eeax
which is equivalent to
\beax
|v|\leqslant|u|.
\eeax		
If $u=0$, then $v=0$ and the identity channel works directly. If $u\ne0$, we let
$\lambda:=\dfrac{v}{u}$.
In the basis $\{|+\ra, |-\ra\}$, we take
\beax
K_1=\begin{pmatrix}1&0\\0&\lambda^*\end{pmatrix},
~
K_2=\begin{pmatrix}0&0\\0&\sqrt{1-|\lambda|^2}\end{pmatrix}.
\eeax
Then
$K_1^\dagger K_1+K_2^\dagger K_2=I$,
and under this channel, denoted by $\mE$, $x$ is transformed to $\lambda x=y$ while the diagonal elements remains unchanged, that is, $\mE(\rho)=\sigma$.

\textit{Case 2: $s_x>r_x$.}		
Let
\beax
C_0:= \dfrac{u^*v}{|u|^2+q(s_x-r_x)},~ E_0:=\dfrac{v-C_0u}{q},
\eeax
where $u$, $v$, $q$ are defined as in Eq.~\eqref{notation1}. The same as Eq.~\eqref{Bloch3}, we have
\beax
|C_0|^2+\dfrac{|E_0|^2}{1-D}\leqslant D.
\eeax
We take three Kraus operators (still in the basis $\{|+\ra,|-\ra\}$):
\begin{align*}
K_1&=\begin{pmatrix}1&0\\ 0 & C_0^*\end{pmatrix},\\
K_2&=\begin{pmatrix}0&\sqrt{1 - D}\\0&E_0^*/\sqrt{1-D}\end{pmatrix},\\
K_3&=\begin{pmatrix}0&0\\0&\sqrt{D-|C_0|^2-\dfrac{|E_0|^2}{1-D}}\end{pmatrix}.
\end{align*}		
It turns out that
\beax
\sum_{j=1}^3K_j^\dagger K_j  
&=&
\begin{pmatrix}
1&0\\
0&|C_0|^2
\end{pmatrix}
+
\begin{pmatrix}
0&0\\
0&1-D+\dfrac{|E_0|^2}{1-D}
\end{pmatrix} \\
&&+
\begin{pmatrix}
0&0\\
0&D-|C_0|^2- \dfrac{|E_0|^2}{1-D}
\end{pmatrix}
=I 
\eeax
and 
\beax
\sum_{j=1}^3K_j\rho K_j^\dagger
&=&
\begin{pmatrix}
p&xC_0 \\
(xC_0)^*&|C_0|^2(1-p)
\end{pmatrix}\\
&&+
\begin{pmatrix}
(1-D)(1-p)&(1-p)E_0\\
(1-p)E_0^*&\dfrac{|E_0|^2(1-p)}{1-D}
\end{pmatrix} \\
&&+
\begin{pmatrix}
0&0\\
0&\left(D-|C_0|^2-\dfrac{|E_0|^2}{1-D}\right)(1-p)
\end{pmatrix}\\
&=&
\begin{pmatrix}
p+(1-D)(1-p)&(1-p)E_0+xC_0\\
(1-p)E_0^*+(xC_0)^*&D(1-p)
\end{pmatrix} \\
&=&
\begin{pmatrix}
p'&y\\
y^*&1 - p'
\end{pmatrix}
\eeax
as desired.
\end{proof}

At the end of this section, we list in Table~\ref{tab:2} the conditions of deterministic transformation between qubit states in the resource theory of imaginarity, coherence and texture for readers' convenience.


\section{Building texture measures}\label{build}


\subsection{Convex function based measures}

Since the free-state set contains only one element, it is natural to construct a texture measure from the overlap $\la f_1|\rho|f_1\ra$, which has both geometric interpretation and simple algebraic representation~\cite{Parisio2024prl}. The rugosity measure $\mT_R(\rho)=-\ln\la f_1 |\rho|f_1\ra$ is such a case.

Note that $-\ln x$ is a convex function and satisfies items (i)-(ii). This motivates us to generalize $\mT_R$ into a wider class.
Let $\check{h}:[0,1]\rightarrow[0, +\infty)$ be a convex function that also admits items (i)-(ii). We define
\be\label{fun3}
{\mT}_{\check{h}}(\rho)=\check{h}(\la f_1|\rho|f_1\ra).
\ee
The following is the main result of this subsection.

\begin{theorem}\label{th6}
For any convex function $\check{h}:[0,1]\rightarrow[0, +\infty)$ satisfying items (i)-(ii), ${\mT}_{\check{h}}$ defined in Eq.~\eqref{fun3} constitutes a valid texture measure.
\end{theorem}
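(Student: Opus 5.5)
We verify (T1)--(T3) in turn. The whole argument rests on one observation: if $\mE$ is texture-free with Kraus operators $K_j$, then $K_j|f_1\ra=c_j|f_1\ra$ for some scalars $c_j$. Hence
\[
\sum_j K_j^\dagger K_j=I \quad\text{implies}\quad \sum_j|c_j|^2=1 .
\]
Taking adjoints gives $\la f_1|K_j^\dagger=c_j^*\la f_1|$, but this does not directly give $\la f_1|K_j$. What we actually need is the overlap of the output with $|f_1\ra$:
\[
\la f_1|\mE(\rho)|f_1\ra=\sum_j\la f_1|K_j\rho K_j^\dagger|f_1\ra .
\]
The plan is to compare this with $\la f_1|\rho|f_1\ra$ through the dual map $\mE^\dagger(\tau_f)=\sum_jK_j^\dagger\tau_fK_j$.

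For (T1), note that $\la f_1|\tau_f|f_1\ra=1$, so $\mT_{\check h}(\tau_f)=\check h(1)=0$ by item (i). For (T3), the map $\rho\mapsto\la f_1|\rho|f_1\ra$ is affine. Therefore
\[
\mT_{\check h}\Bigl(\sum_jp_j\rho_j\Bigr)=\check h\Bigl(\sum_jp_j\la f_1|\rho_j|f_1\ra\Bigr)\leqslant\sum_jp_j\check h(\la f_1|\rho_j|f_1\ra)
\]
by convexity of $\check h$ (Jensen). Non-negativity holds because $\check h$ maps into $[0,+\infty)$.

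For (T2), since $\check h$ is monotonically decreasing (item (ii)), it suffices to show the overlap never decreases:
\[
\la f_1|\mE(\rho)|f_1\ra\geqslant\la f_1|\rho|f_1\ra .
\]
This reduces to the operator inequality $\mE^\dagger(\tau_f)\geqslant\tau_f$. I would prove it as follows. Set $P=|f_1\lr f_1|$ and write $\tau_f=P$. Each $K_j$ maps $|f_1\ra$ to a multiple of itself, so in the decomposition $\mH=\mathrm{span}\{|f_1\ra\}\oplus\{|f_1\ra\}^\perp$ it is block upper triangular:
\[
K_j=c_jP+PB_j(I-P)+(I-P)D_j(I-P).
\]
This is exactly the upper triangular form used in the proof of Theorem~\ref{th4}.

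Now compute $K_j^\dagger PK_j=(PK_j)^\dagger(PK_j)$ with $PK_j=c_jP+PB_j(I-P)$. Summing over $j$ gives
\[
\mE^\dagger(P)=\sum_j\bigl(c_jP+PB_j(I-P)\bigr)^\dagger\bigl(c_jP+PB_j(I-P)\bigr).
\]
Its $P$--$P$ block is $\sum_j|c_j|^2P=P$. Its off-diagonal block is $\sum_jc_j^*PB_j(I-P)$. Trace preservation, compressed to the $P$ row and $(I-P)$ column, is $P\sum_jK_j^\dagger K_j(I-P)=0$. Since $(I-P)K_jP=0$, only $PK_j$ contributes to $PK_j^\dagger$ here, and this yields $\sum_jc_j^*PB_j(I-P)=0$. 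This is the analogue of $\sum_j\alpha_j^*\beta_j=0$ in Eq.~\eqref{conditions}. So the off-diagonal blocks vanish. The remaining block is $\sum_j(I-P)B_j^\dagger PB_j(I-P)\geqslant0$.

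Hence $\mE^\dagger(\tau_f)=\tau_f+(\text{positive operator supported on }\{|f_1\ra\}^\perp)\geqslant\tau_f$. Therefore $\tr(\mE(\rho)\tau_f)=\tr(\rho\,\mE^\dagger(\tau_f))\geqslant\tr(\rho\tau_f)$, and applying the decreasing $\check h$ gives $\mT_{\check h}(\mE(\rho))\leqslant\mT_{\check h}(\rho)$.

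The main obstacle is this monotonicity step, specifically checking the vanishing of the cross terms. It relies only on the $(P,I-P)$ block of $\sum_jK_j^\dagger K_j=I$, and I would verify it carefully in the general $d$-dimensional setting, mirroring the qubit computation in Theorem~\ref{th4}.
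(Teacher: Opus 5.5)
Your proof is correct and has the same outline as the paper's: (T1) from $\check h(1)=0$, (T3) from Jensen applied to the affine map $\rho\mapsto\la f_1|\rho|f_1\ra$, and (T2) from the overlap not decreasing under texture-free operations together with $\check h$ being decreasing.

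The one difference is the overlap inequality itself. The paper does not prove it; it imports it as $\Sigma(\mE(\rho))\geqslant\Sigma(\rho)$ from Eq.~(11) of the Supplemental Material of Ref.~\cite{Parisio2024prl}. You instead derive it in the Heisenberg picture as $\mE^\dagger(\tau_f)=\tau_f+X$, with $X=\sum_j(I-P)B_j^\dagger PB_j(I-P)\geqslant 0$. Your cancellation of the cross terms is valid in any dimension $d$. The cleanest justification for it is $PK_j^\dagger=(K_jP)^\dagger=c_j^*P$, which combined with $P\bigl(\sum_jK_j^\dagger K_j\bigr)(I-P)=0$ gives $\sum_jc_j^*PB_j(I-P)=0$.

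So your argument is self-contained. It also yields a slightly stronger operator statement, whose pure-state form is the inequality the paper later borrows from Eq.~(25) of Ref.~\cite{Cui2025arvix}.
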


\begin{proof}
(T1) is clear by definition.

For state $\rho$, we have~\cite{Parisio2024prl}
\bex\label{Sigma}
d\cdot\la f_1|\rho|f_1\ra=\sum_{i,j=0}^{d-1}\la i|\rho|j\ra=\sum_{i,j=0}^{d-1} \rho_{ij}.
\eex
Hereafter, we let $\Sigma(\rho):=\sum_{i,j=0}^{d-1} \rho_{ij}$. Using Eq.~(11) from the Supplemental Material of Ref.~\cite{Parisio2024prl}, we know $\Sigma(\mE(\rho)) \geqslant \Sigma(\rho)$, which implies
$\la f_1|\mE(\rho)|f_1\ra\geqslant\la f_1|\rho|f_1\ra$.
Since $h$ is decreasing, we obtain
$\mT_{\check{h}}(\rho)\geqslant\mT_{\check{h}}(\mE(\rho))$.
That is, (T2) holds.

In addition, for any ensemble $\{p_j, \rho_j\}$, since $\check{h}$ is convex, 
$\mT_{\check{h}}\Big(\sum_jp_j\rho_j\Big)
=\check{h}\Big(\sum_jp_j\la f_1|\rho_j|f_1\ra\Big)
\leqslant\sum_jp_j \check{h}\big(\la f_1|\rho_j|f_1\ra\big)
=\sum_jp_j\mT_{\check{h}}(\rho_j)$,
namely, $\mT_{\check{h}}$ satisfies item (T3).		
\end{proof}

$\mT_{\check{h}}$ recovers $\mT_{\Fi}$, $\mT_{B}$, $\mT_R$, and $\mT_g$ as special cases. We list the corresponding convex functions $\check{h}$ of theses texture measures in the second row in
Table~\ref{tab:comparison}. Theorem~\ref{th6} guarantees that $\mT_{\check{h}}$ defines a valid texture measure in the sense of (T1)-(T3). It is then natural to investigate whether $\mT_{\check{h}}$ also satisfies (T4). As one may expect, it is not a texture monotone in general.

It should be emphasized that the present method is not directly applicable to other quantum resources---e.g., entanglement, coherence, and imaginarity---since none of these resources admits a unique free-state characterization. Indeed, in the case of entanglement, free states form a subset of $\mS$ with nonzero measure~\cite{Zyczkowski1998pra}, whereas for coherence and imaginarity, free states form measure-zero subsets of $\mS$. Hence, most states are resourceful.

\begin{table*}[t]
	\caption{The associated function $h$ or $\check{h}$ for the texture measures.}
	\label{tab:comparison}
	\begin{ruledtabular}
		\begin{tabular}{lcccccccccc}
			$h/\check{h}$  
			& $\mT_{tr}(|\psi\ra)$
			& $\mT_{\Fi}(\rho)$
			& $\mT_B(\rho)$
			& $\mT_{\mu}^{TS}(|\psi\ra)$
			& $\mT_J(\psi\ra)$
			& $\mT_{he}(|\psi\ra)$
			& $\mT_R(\rho)$
			& $\mT_g(\rho)$
			& $\mT_w(|\psi\ra)$
			& $\mT_{\alpha}^{\mathrm{skew}}(|\psi\ra)$\\
			\midrule
			$h$
			& $\sqrt{1-x}$
			& $1-x$
			&
			& $\dfrac{1-x}{1-\mu}$
			& $H_2\left(\dfrac{1+\sqrt{x}}{2}\right)$
			& $2(1-x)$
			&
			& $1-x$
			& {?}
			& {$x(1-x)$}\\
			$\check{h}$
			&
			& $1-x$
			& $2(1-\sqrt{x})$
			&
			&
			&
			& $-\ln x$
			& $1-x$
			&
			&
		\end{tabular}%
		\begin{tablenotes}[flushleft]
			\footnotesize
			\item[]\raggedright
			\setlength{\parfillskip}{0pt}
			{\textit{Note}}: (i) Here, $\mT(\lvert\psi\rangle)$ refers to the corresponding function $h$ is defined on pure states whereas $\mT(\rho)$ refers to the the corresponding function $h$ is defined on both pure states and mixed states uniformly. (ii) ``?'' means it is unknown.
		\end{tablenotes}
	\end{ruledtabular}
\end{table*}

\subsection{Texture cost}\label{measu}

Note that, any state can be generated from some input pure states via texture-free operations~\cite{Parisio2024prl}. For any given state $\rho$, we let $\mP_{\rho}$ be the set of all pure states that can be transformed into $\rho$ by means of texture-free operations. Let $h: [0,1]\rightarrow[0, +\infty)$ be a function satisfying conditions (i)-(iii) and let $\mT$ be defined as in Eq.~\eqref{fun1}. We define the texture cost associated with $\mT$ by
\be
\check{\mT}(\rho)=\min_{|\phi\ra\in\mP_{\rho}}\mT(|\phi\ra). 
\ee

\begin{theorem}
$\check{\mT}(\rho)=h(\la f_1|\rho|f_1\ra)$ and $\check{\mT}(\rho)$ satisfies (T1)-(T2) and (T4).
\end{theorem}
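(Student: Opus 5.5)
The plan is to establish the closed formula first. The three properties will then follow from it together with known facts about $\la f_1|\rho|f_1\ra$.

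\emph{Step 1: the formula.} For the upper bound $\check{\mT}(\rho)\leqslant h(\la f_1|\rho|f_1\ra)$, take any pure-state ensemble $\{p_j,|\phi_j\ra\}$ of $\rho$ and apply Corollary~\ref{cor3}. It gives a pure state $|\psi\ra$ with $|\la f_1|\psi\ra|^2=\la f_1|\rho|f_1\ra$, and a texture-free operation with $K_j|\psi\ra=\sqrt{p_j}|\phi_j\ra$. This operation maps $|\psi\ra$ to $\rho$, so $|\psi\ra\in\mP_\rho$ and $\check{\mT}(\rho)\leqslant h(|\la f_1|\psi\ra|^2)=h(\la f_1|\rho|f_1\ra)$.

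In the degenerate case $\la f_1|\rho|f_1\ra=0$ (i.e.\ $t=1$), the construction of Corollary~\ref{cor3} may not literally apply. There I would instead use Case 1 of the proof of Theorem~\ref{th1}: any $|\psi\ra\perp|f_1\ra$ reaches $\rho$ with probability one.

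For the lower bound, let $|\phi\ra\in\mP_\rho$, so $\mE(|\phi\lr\phi|)=\rho$ for some texture-free $\mE$. By the fact used in the proof of Theorem~\ref{th6} (Eq.~(11) of the Supplemental Material of Ref.~\cite{Parisio2024prl}), $\la f_1|\mE(\sigma)|f_1\ra\geqslant\la f_1|\sigma|f_1\ra$. Hence $\la f_1|\rho|f_1\ra\geqslant|\la f_1|\phi\ra|^2$. Since $h$ is decreasing, $h(|\la f_1|\phi\ra|^2)\geqslant h(\la f_1|\rho|f_1\ra)$. Minimizing over $\mP_\rho$ gives equality. The minimum is attained, by the $|\psi\ra$ constructed above.

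\emph{Step 2: (T1) and (T2).} (T1) holds because $\la f_1|\tau_f|f_1\ra=1$ and $h(1)=0$. For (T2) there are two routes. One is to use the formula with the monotonicity of $\la f_1|\cdot|f_1\ra$ and the fact that $h$ is decreasing, exactly as in Theorem~\ref{th6}. The other is more direct: $\mP_\rho\subseteq\mP_{\mE(\rho)}$, because texture-free operations compose.

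\emph{Step 3: (T4).} This is the main step, since $h$ is concave rather than convex. For Kraus operators $K_j$ with $K_j|f_1\ra=c_j|f_1\ra$, set $p_j=\tr(K_j\rho K_j^\dagger)$ and $\sigma_j=K_j\rho K_j^\dagger/p_j$. Write $x=\la f_1|\rho|f_1\ra$ and $x_j=\la f_1|\sigma_j|f_1\ra$. The goal is $h(x)\geqslant\sum_jp_jh(x_j)$.

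First I would show $\sum_jp_jx_j\geqslant x$. Note $p_jx_j=\la f_1|K_j\rho K_j^\dagger|f_1\ra$, so the sum equals $\la f_1|\mE(\rho)|f_1\ra\geqslant x$. Then concavity (Jensen) gives $\sum_jp_jh(x_j)\leqslant h\big(\sum_jp_jx_j\big)$, and the decreasing property gives $h\big(\sum_jp_jx_j\big)\leqslant h(x)$. Combining these proves (T4).

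The only delicate point is the monotonicity $\la f_1|\mE(\rho)|f_1\ra\geqslant\la f_1|\rho|f_1\ra$, which I take from the cited result. To be self-contained, I would prove it directly. Write $\rho$ in the basis $\{|f_1\ra,|f_1\ra^\perp\}$. Since $K_j^\dagger$ leaves $|f_1\ra^\perp$ invariant, one has $\la f_1|K_j\rho K_j^\dagger|f_1\ra\geqslant |c_j|^2\la f_1|\rho|f_1\ra$. Completeness of the Kraus operators gives $\sum_j|c_j|^2=1$, and summing over $j$ yields the inequality.
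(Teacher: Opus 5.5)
Your main argument is correct if you rely, as the paper does, on the cited fact $\la f_1|\mE(\rho)|f_1\ra\geqslant\la f_1|\rho|f_1\ra$. The closed formula, (T1) and (T2) go essentially as in the paper.

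For (T4) your route is shorter. Once $\check{\mT}(\rho)=h(\la f_1|\rho|f_1\ra)$ is known, you apply Jensen to $x_j=\la f_1|\sigma_j|f_1\ra$. You then use that $h$ is decreasing, together with $\sum_jp_jx_j=\la f_1|\mE(\rho)|f_1\ra\geqslant x$. The paper instead goes back to an optimal pure preimage $|\psi\ra$ with Kraus operators $A_j$ and decomposes each $\rho_n$ into the states $K_nA_j|\psi\ra$. It bounds $\check{\mT}(\rho_n)$ via Corollary~\ref{cor3} and applies Eq.~(25) of Ref.~\cite{Cui2025arvix} to the composite operators $K_nA_j$ acting on $|\psi\ra$. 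The ingredients are the same: concavity, decreasing $h$, and monotone overlap. Indeed the paper's $\sum_jq_{nj}|\la f_1|\psi_{nj}\ra|^2$ is just $\la f_1|\rho_n|f_1\ra$. Your version shows that the pure-preimage machinery is unnecessary once the formula is in hand. Your handling of the case $\la f_1|\rho|f_1\ra=0$ and of attainment of the minimum also covers points the paper passes over.

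The self-contained proof you sketch for the overlap monotonicity does not work. The termwise bound $\la f_1|K_j\rho K_j^\dagger|f_1\ra\geqslant|c_j|^2\la f_1|\rho|f_1\ra$ is false. Write $K_j^\dagger|f_1\ra=c_j^*|f_1\ra+|w_j\ra$ with $|w_j\ra\perp|f_1\ra$. Then the left side equals $|c_j|^2\la f_1|\rho|f_1\ra+2\,\mathrm{Re}\bigl(c_j\la f_1|\rho|w_j\ra\bigr)+\la w_j|\rho|w_j\ra$. The cross term has no definite sign, and invariance of $|f_1\ra^\perp$ under $K_j^\dagger$ does not remove it.

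Here is a concrete counterexample in a qubit, with $|u\ra\perp|f_1\ra$. Take $K_{1,2}=\frac{1}{\sqrt2}|f_1\ra(\la f_1|\pm\la u|)$; this is a texture-free channel with $c_1=c_2=1/\sqrt2$. Take $\rho=|\phi\lr\phi|$ with $|\phi\ra=(|f_1\ra-|u\ra)/\sqrt2$. Then $K_1|\phi\ra=0$, so the $j=1$ term is $0<\frac14=|c_1|^2\la f_1|\rho|f_1\ra$.

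The inequality holds only after summing over $j$, and the proof must use completeness globally. Since $\la f_1|K_j^\dagger=c_j^*\la f_1|$, one has $\sum_jc_j^*\la f_1|K_j=\la f_1|\sum_jK_j^\dagger K_j=\la f_1|$. Hence $\la f_1|\phi\ra=\sum_jc_j^*\la f_1|K_j|\phi\ra$. Cauchy--Schwarz with $\sum_j|c_j|^2=1$ then gives $|\la f_1|\phi\ra|^2\leqslant\sum_j|\la f_1|K_j|\phi\ra|^2$. Averaging over a pure-state decomposition of $\rho$ gives the mixed-state inequality. With this replacement your proposal is complete and self-contained.
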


\begin{proof}	
Assume that $\rho=\sum_ip_i|\phi_i\lr\phi_i|$. By Corollary~\ref{cor3}, there exists $|\psi\ra$ such that $h(|\la f_1|\psi\ra|^{2})=h(\la f_1|\rho|f_1 \ra)$ and
\beax
\check{\mT}(\rho)\leqslant h(|\la f_1|\psi\ra|^{2})=h(\la f_1|\rho|f_1 \ra).
\eeax
However, if $|\Psi\ra$ be an optimal state such that $\rho=\mE(|\Psi\lr\Psi|)=\sum_jK_j|\Psi\lr\Psi|K_j^\dagger$ and $\check{\mT}(\rho)=h(|\la f_1|\Psi\ra|^2)$, by Eq.~(25) in Ref.~\cite{Cui2025arvix}, we have $ \sum_j|\la f_1|K_j|\Psi\ra|^2\geqslant |\la f_1|\Psi\ra|^2$, and thus
\beax
h\left( \sum_j|\la f_1|K_j|\Psi\ra|^2\right) \leqslant h\left( |\la f_1|\Psi\ra|^2\right).
\eeax
It follows that
\beax
\check{\mT}(\rho)
&=&h(|\la f_1|\Psi\ra|^2)
\geqslant h\left( \sum_j|\la f_1|K_j|\Psi\ra|^2\right)\\
&=&h(\la f_1|\rho|f_1 \ra).
\eeax
Therefore, $\check{\mT}(\rho)=h(\la f_1|\rho|f_1\ra)$.

Since 
\beax
\check{\mT}(\tau_f)=\min_{|\phi\ra\in\mP_{\tau_f}}\mT(|\phi\ra),
\eeax
we take $|\phi\ra=|f_1\ra$. Then $\check{\mT}(\tau_f)=0$. So $\check{\mT}(\rho)$ satisfies (T1). 
For any state $\rho$, we let $|\psi\ra\in\mP_{\rho}$ is the optimal pure state such that $\check{\mT}(\rho)=\mE_0(|\psi\ra\la\psi|)$ for some texture-free operation $\mE_0$. That is
\beax
\check{\mT}(\rho)=\mT(|\psi\ra),~
\mE_0(|\psi\lr\psi|)=\rho.
\eeax
Applying any texture-free operation $\mE$ on $\rho$, we have $\mE\bigl(\mE_0(|\psi\lr\psi|)\bigr)=\mE(\rho)$. Note that $\mE\circ\mE_0$ is also a texture-free operation, by the definition of $\check{\mT}$, we immediately obtain
\beax
\check{\mT}(\rho)=\mT(|\psi\ra)\geqslant \check{\mT}(\mE(\rho)),
\eeax
which reveals the monotonicity under texture-free operations.

We now check item (T4). Since $\check{\mT}(|\psi\ra)=\mT(|\psi\ra)$ for all pure states $|\psi\ra$, strong monotonicity holds for pure states. Let $\rho$ be a mixed state and $\mE(\cdot)=\sum_nK_n(\cdot)K_n^\dagger$ be a texture-free operation. We write
\beax
p_n=\tr (K_n\rho K_n^\dagger),\quad\rho_n=\frac{K_n \rho K_n^\dagger}{p_n},
\eeax
and let $|\psi\ra$ be the optimal state in $\mP_{\rho}$ such that $\check{\mT}(\rho)=\mT(|\psi\ra)=h(|\la f_1|\psi\ra|^2)$. We assume that $\rho = \sum_jA_j |\psi\lr\psi|A_j^\dagger$ and let
\begin{align*}
c_{nj}&=\la\psi|K_n^\dagger A_j^\dagger A_jK_n|\psi\ra,~
|\psi_{nj}\ra=\dfrac{K_nA_j|\psi\ra}{\sqrt{c_{nj}}},~
q_{nj}=\dfrac{c_{nj}}{p_n}.
\end{align*}
Then, $\rho_n=\sum_{j}q_{nj}|\psi_{nj}\lr\psi_{nj}|$.
For any $\rho_n$, there exists a pure state $|\eta_n\ra\in\mP_{\rho_n}$ for each $n$. Clearly, 
$\check{\mT}(\rho_n)\leqslant\mT(|\eta_n\ra)$.
Together with $\mT(|\eta_n\ra)=h(|\la f_1|\eta_n\ra|^2)$ and Eq.~\eqref{cor3--}, we obtain
\be\label{ctrn}
\check{\mT}(\rho_n)\leqslant h\left( \sum_jq_{nj}|\la f_1|\psi_{nj}\ra |^2\right).
\ee
Therefore,
\beax
&&\sum_np_n \check{\mT}\left(\rho_n\right)\\
\leqslant&& \sum_np_nh\left( \sum_jq_{nj}|\la f_1|\psi_{nj}\ra |^2\right)\\
=&& \sum_np_nh\left(\frac{\sum_j\la f_1|K_n A_j|\psi\ra\la\psi|A_j^\dagger K_n^\dagger|f_1\ra}{p_n}\right)\\
\leqslant&& h\left( \sum_{n,j}|\la f_1|K_nA_j|\psi\ra|^2 \right),
\eeax
where the first inequality follows directly from Eq.~\eqref{ctrn} and the second inequality follows from the concavity of $h$. 
On the other hand, by Eq.~(25) in Ref.~\cite{Cui2025arvix}, we have $ \sum_{n,j}|\la f_1|K_nA_j|\psi\ra|^2\geqslant |\la f_1|\psi\ra|^2$, and thus 
\be\label{hls}
h\left( \sum_{n,j}|\la f_1|K_nA_j|\psi\ra|^2 \right)
\leqslant h(|\la f_1|\psi\ra|^2)=\check{\mT}(\rho)
\ee
since $h$ is monotonic decreasing. 
This complete the proof of the strong monotonicity of $\check{\mT}$. 
\end{proof}

We need note here that, if $\check{\mT}$ is not convex, it is not a texture measure.
A similar situation also appears in the resource theories of entanglement~\cite{Shi2021anpb}, coherence~\cite{Yu2020pra}, and imaginarity~\cite{Du2025pra}, where the measures constructed through pure-state transformations satisfy non-negativity, monotonicity and strong monotonicity. For these three types of quantum resources, measures of this type require additional conditions to satisfy convexity. We now turn to the question: under what conditions does $\check{\mT}$ satisfy (T3)? The following theorem provides an equivalent description of this issue.

\begin{theorem}\label{convex1}
The following statements are equivalent:
\begin{enumerate}
\item[(i)] $\check{\mT}$ is a texture monotone;
\item[(ii)] $\check{\mT}(\rho)=\mT_F(\rho)$;
\item[(iii)] $h$ is an affine function.
\end{enumerate}		
\end{theorem}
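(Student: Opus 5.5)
The plan is to prove the cycle (iii)$\Rightarrow$(ii)$\Rightarrow$(i)$\Rightarrow$(iii). Throughout I use the formula $\check{\mT}(\rho)=h(\la f_1|\rho|f_1\ra)$ from the preceding theorem. I also use the fact, established there, that $\check{\mT}$ already satisfies (T1), (T2) and (T4) for every $h$ obeying (i)–(iii). Hence being a texture monotone is equivalent to additionally satisfying convexity (T3).

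\emph{(iii)$\Rightarrow$(ii).} Suppose $h$ is affine. For any pure-state decomposition $\rho=\sum_ip_i|\psi_i\lr\psi_i|$, affinity gives
\[
\sum_ip_ih(|\la f_1|\psi_i\ra|^2)=h\Big(\sum_ip_i|\la f_1|\psi_i\ra|^2\Big)=h(\la f_1|\rho|f_1\ra).
\]
So every decomposition yields the same value. The minimum in Eq.~\eqref{fun2} therefore equals $h(\la f_1|\rho|f_1\ra)=\check{\mT}(\rho)$, which is (ii).

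\emph{(ii)$\Rightarrow$(i).} This is immediate from the result of Ref.~\cite{Cui2025arvix} recalled after Eq.~\eqref{fun2}. For any $h$ satisfying (i)–(iii), $\mT_F$ is a texture monotone, so $\check{\mT}=\mT_F$ is one as well.

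\emph{(i)$\Rightarrow$(iii).} This is the only step with real content. Assume $\check{\mT}$ is convex. Fix $x,y\in[0,1]$ and $\lambda\in[0,1]$. I would realize each overlap value by an explicit pure state, e.g.
\[
|\phi_x\ra=\sqrt{x}\,|f_1\ra+\sqrt{1-x}\,|f_2\ra,
\]
so that $\la f_1|\phi_x\ra\la\phi_x|f_1\ra=x$; define $|\phi_y\ra$ in the same way. The overlap $\la f_1|\cdot|f_1\ra$ is linear in the state. Applying (T3) to the mixture $\lambda|\phi_x\lr\phi_x|+(1-\lambda)|\phi_y\lr\phi_y|$ then gives
\[
h(\lambda x+(1-\lambda)y)\leqslant\lambda h(x)+(1-\lambda)h(y).
\]
Hence $h$ is convex on all of $[0,1]$. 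Combined with the standing assumption (iii) that $h$ is concave, both inequalities hold for all $x,y,\lambda$. Equality in Jensen's inequality for all two-point mixtures means $h(x)=h(0)+x\,(h(1)-h(0))$ on $[0,1]$, i.e., $h$ is affine. Using $h(1)=0$ and monotonic decrease, this is $h(x)=c(1-x)$ with $c\geqslant0$.

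The only point I expect to need care is confirming that the overlap map $\rho\mapsto\la f_1|\rho|f_1\ra$ is onto $[0,1]$, so that convexity of $\check{\mT}$ transfers to convexity of $h$ on the whole interval. The explicit states $|\phi_x\ra$ settle this, using $d\geqslant2$ so that $|f_2\ra$ exists. Everything else is formal.
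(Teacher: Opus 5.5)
Your proof is correct and uses the same ingredients as the paper's: the formula $\check{\mT}(\rho)=h(\la f_1|\rho|f_1\ra)$, the result of Ref.~\cite{Cui2025arvix} that $\mT_F$ is a texture monotone, and the fact that a function both convex and concave is affine. You run the cycle in the opposite direction, proving (iii)$\Rightarrow$(ii) by noting that every decomposition gives the same value, where the paper uses a two-sided inequality argument for (i)$\Rightarrow$(ii); you also make explicit, via $|\phi_x\ra$, the surjectivity of $\rho\mapsto\la f_1|\rho|f_1\ra$ onto $[0,1]$, which the paper uses tacitly when it says convexity of $\check{\mT}$ implies convexity of $h$.
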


\begin{proof}
$(i) \Rightarrow (ii)$. Assume that $\rho=\sum_ip_i|\phi_i\lr\phi_i|$. Since $\check{\mT}$ is convex,  $\check{\mT}(\rho)=\check{\mT}(\sum_ip_i|\phi_i\lr\phi_i|)\leqslant\sum_ip_i\check{\mT}(|\phi_i\ra)$. The above inequality holds for any ensemble decomposition of $\rho$, we have $\check{\mT}(\rho)\leqslant\min_{p_i,|\phi_i\ra}\sum_i{p_i}h(|\la f_1|\phi_i\ra|^2)=\mT_F(\rho)$. On the other hand, 
\be\label{jen}
h(\la f_1|\rho|f_1\ra)\geqslant\min_{p_i,|\phi_i\ra}\sum_i{p_i}h(|\la f_1|\phi_i\ra|^2)=\mT_F(\rho) 
\ee
since $h$ is concave. That is $\check{\mT}(\rho)=\mT_F(\rho)$.

$(ii) \Rightarrow (iii)$. If $\check{\mT}(\rho)=\mT_F(\rho)$, then $\check{\mT}$ is a texture measure, so it is convex, which implies $h$ is convex. But $h$ is concave by definition. So it is affine.

$(iii) \Rightarrow (i)$. If $h$ is affine, then it is convex, and thus $\check{\mT}$ is convex. This implies $\check{\mT}$ is a texture monotone.
\end{proof}

Theorem~\ref{convex1} provides a simple way to construct texture monotones. However, one can easily check that such a affine function $h$ is uniquely determined by $h(x) = k(x-1)$ with $k< 0$.
Choosing $h(x)=1-x$, it recovers the texture monotone $\mT_{\Fi}$, and all other cases are only multiples of $\mT_{\Fi}$. Obviously, Theorem~\ref{convex1} enables us to verify whether $\mT_B$ and $\mT_R$ obey strong monotonicity. As their associated functions $\check{h}$ are all non-affine, none of them satisfy the strong monotonicity.

\begin{table*}
	\caption{Property of texture measures.}
	\label{tab:property}
	\begin{ruledtabular}
		\begin{tabular}{lcccccccccccc}
			Texture measure
			& $\mT_{tr}$
			& $\mT_{\Fi}$
			& $\mT_B$
			& $\mT_r$
			& $\mT_{\mu}^{TS}$
			& $\mT_{he}$
			& $\mT_J$
			& $\mT_R$
			& $\mT_g$
			& $\mT_w$
			& $\mT_{rob}$
			& $\mT_{\alpha}^{\mathrm{skew}}$\\
			\midrule
			(T1)
			& $\checkmark$\cite{Wang2025pra}
			& $\checkmark$\cite{Wang2025pra}
			& $\checkmark$\cite{Wang2025pra}
			& $\checkmark$\cite{Wang2025pra}
			& $\checkmark$\cite{Zhang2025pla}
			& $\checkmark$\cite{Muthuganesan2026pla}
			& $\checkmark$\cite{Muthuganesan2026pla}
			& $\checkmark$\cite{Parisio2024prl}
			& $\checkmark$\cite{Wang2025pra}
			& $\checkmark$\cite{Zhang2025pla}
			& $\checkmark$\cite{Wang2025pra}
			& $\checkmark$\cite{Muthuganesan2026pla}\\
			(T2)
			& $\checkmark$\cite{Wang2025pra}
			& $\checkmark$\cite{Wang2025pra}
			& $\checkmark$\cite{Wang2025pra}
			& $\checkmark$\cite{Wang2025pra}
			& $\checkmark$\cite{Zhang2025pla}
			& $\checkmark$\cite{Muthuganesan2026pla}
			& $\checkmark$\cite{Muthuganesan2026pla}
			& $\checkmark$\cite{Parisio2024prl}
			& $\checkmark$\cite{Wang2025pra}
			& $\checkmark$\cite{Zhang2025pla}
			& $\checkmark$\cite{Wang2025pra}
			& $\times$\\
			(T3)
			& $\checkmark$\cite{Wang2025pra}
			& $\checkmark$\cite{Wang2025pra}
			& $\checkmark$\cite{Wang2025pra}
			& $\checkmark$\cite{Wang2025pra}
			& $\checkmark$\cite{Zhang2025pla}
			& $\checkmark$\cite{Muthuganesan2026pla}
			& $\checkmark$\cite{Muthuganesan2026pla}
			& $\checkmark$\cite{Parisio2024prl}
			& $\checkmark$\cite{Wang2025pra}
			& $\checkmark$\cite{Zhang2025pla}
			& $\checkmark$\cite{Wang2025pra}
			& $\checkmark$\cite{Muthuganesan2026pla}\\
			(T4)
			& $\times$
			& $\checkmark$\cite{Cui2025arvix}
			& $\times$
			& $\checkmark$\cite{Wang2025pra}
			& $\times$
			& $\times$
			& $\times$
			& $\times$
			& $\checkmark$\cite{Cui2025arvix}
			& $\checkmark$\cite{Zhang2025pla}
			& ?
			& $\times$\\
			(T5)
			& $\times$
			& $\times$
			& $\times$
			& $\times$
			& $\times$
			& $\times$
			& $\times$
			& $\times$
			& $\times$
			& $\times$\cite{Zhang2025pla}
			& $\times$
			& $\times$
		\end{tabular}%
	\begin{tablenotes}[flushleft]
		\footnotesize
		\item[]
		\parbox{\textwidth}{
		\raggedright
		\setlength{\parfillskip}{0pt}
	\textit{Note:}{  (i) Some of the corresponding properties of ``$\times$'' that are not accompanied by references are demonstrated in Appendix, while the others can be directly derived from the logical relations among conditions (T1)-(T5). (ii) It remains unclear whether $\mT_{rob}$ fulfills the strong monotonicity condition.
	 }}
	\end{tablenotes}
	\end{ruledtabular}
\end{table*}

\subsection{Convex-roof extended texture measure}

The convex-roof extended measure is the most widely used one in the context of the convex quantum resource theories~\cite{Vidal2000jmo,Du2015qic,Du2025pra,Zhu2017pra,Cui2025arvix}, such as entanglement, coherence, imaginarity, and texture. In such a scenario, for any given resource measure defined on pure states, it induces an associated function that admits some special properties and can be extended into mixed states. In convex quantum resource theories-including entanglement~\cite{Vidal2000jmo}, coherence~\cite{Yu2020pra,Du2015qic}, imaginarity~\cite{Du2025pra}, and texture~\cite{Cui2025arvix}---the convex-roof extended measures are strong resource monotones. Conversely, restricting any strong resource monotone to pure states yields a function that always shares the same distinguishing properties. This is exemplified by the resource theories of entanglement~\cite{Vidal2000jmo}, coherence~\cite{Du2015qic}, and imaginarity~\cite{Du2025pra}.
As in the case of other strong resource monotones, we show below that, for the case of texture, any texture monotone can also induce a corresponding function that admits the special properties.

\begin{theorem}\label{th5}
Let $\mT$ be a texture monotone. Then, for pure states, it is identical to Eq.~\eqref{fun1} for some $h$ satisfying items (i)-(iii).
\end{theorem}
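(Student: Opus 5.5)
Given a texture monotone $\mT$ (satisfying (T1)--(T4)), I will define $h:[0,1]\to[0,+\infty)$ by $h(x):=\mT(|\psi\ra)$ for any pure state $|\psi\ra$ with $|\la f_1|\psi\ra|^2=x$. The first task is to show that $h$ is well defined. If two pure states satisfy $|\la f_1|\psi\ra|^2=|\la f_1|\phi\ra|^2=x$, then Corollary~\ref{cor2} gives $P(|\psi\ra\to|\phi\ra)=P(|\phi\ra\to|\psi\ra)=1$, since $\mT_g$ takes the same value $1-x$ on both. The probability-one conversions are realized by Kraus operators summing to the identity, so they are deterministic texture-free operations. Monotonicity (T2) in both directions then yields $\mT(|\psi\ra)=\mT(|\phi\ra)$. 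For $x=0$ the same argument applies via Case 1 of Theorem~\ref{th1} (the unitary $\mE_U$), and for $x=1$ the state is $|f_1\ra$ up to phase. Hence $\mT(|\psi\ra)=h(|\la f_1|\psi\ra|^2)$, which is exactly Eq.~\eqref{fun1}, and property (i), $h(1)=\mT(\tau_f)=0$, follows from (T1).

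For monotonic decrease (ii), take $x\leqslant y$ and pick pure states $|\psi\ra,|\phi\ra$ with overlaps $x$ and $y$. Then $\mT_g(|\psi\ra)=1-x\geqslant 1-y=\mT_g(|\phi\ra)$, so Corollary~\ref{cor2} gives $P(|\psi\ra\to|\phi\ra)=1$. By (T2), $h(y)\leqslant h(x)$.

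Concavity (iii) is the main step, and I would obtain it from strong monotonicity (T4) using the construction of Corollary~\ref{cor3}. Fix $x,y\in[0,1]$ and $\lambda\in[0,1]$, and choose pure states $|\phi_1\ra,|\phi_2\ra$ with $|\la f_1|\phi_1\ra|^2=x$ and $|\la f_1|\phi_2\ra|^2=y$. Consider the ensemble $\{\lambda,|\phi_1\ra;1-\lambda,|\phi_2\ra\}$ of $\rho=\lambda|\phi_1\lr\phi_1|+(1-\lambda)|\phi_2\lr\phi_2|$. Corollary~\ref{cor3} supplies a pure state $|\psi\ra$ with $|\la f_1|\psi\ra|^2=\lambda x+(1-\lambda)y$, together with texture-free Kraus operators $K_1,K_2$ satisfying $K_j|\psi\ra=\sqrt{p_j}|\phi_j\ra$. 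If needed, a $K_0$ with $K_0|\psi\ra=0$ completes the channel, as in Case 3 of Theorem~\ref{th1}. Strong monotonicity then gives
\[
h\bigl(\lambda x+(1-\lambda)y\bigr)=\mT(|\psi\ra)\geqslant \lambda\mT(|\phi_1\ra)+(1-\lambda)\mT(|\phi_2\ra)=\lambda h(x)+(1-\lambda)h(y),
\]
which is concavity.

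The delicate points I expect are all in this last step. I must check that Corollary~\ref{cor3} really produces a trace-preserving texture-free operation with no leftover branch carrying weight on $|\psi\ra$; the completing Kraus operator $K_0$ annihilates $|\psi\ra$, so it contributes nothing to the sum in (T4). I also need to make sure the degenerate endpoint cases ($x$ or $y$ equal to $0$ or $1$, i.e.\ $t=0$ or $t=1$ in Case 3 of Theorem~\ref{th1}) still work. If they fail, I would handle them separately by direct Kraus constructions in the $\{|f_1\ra,|u\ra\}$ plane.
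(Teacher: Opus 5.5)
Your proof is correct, and its skeleton matches the paper's: (i) from (T1), (ii) from Corollary~\ref{cor2} with (T2), and (iii) from (T4) applied to a texture-free channel that splits a pure state of overlap $\lambda x+(1-\lambda)y$ into the ensemble $\{\lambda,|\phi_1\rangle;\,1-\lambda,|\phi_2\rangle\}$. The differences are in where that channel comes from and in rigor.

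The paper writes down two explicit Kraus operators $\begin{pmatrix}x_j&z_j\\ z_j&x_j\end{pmatrix}$, which are diagonal in the qubit basis $\{|f_1\rangle,|f_1^{\perp}\rangle\}$, and applies them to specific real representatives. This is self-contained and easy to verify. However, as written it is a $d=2$ construction, and it tacitly assumes that $\mT(|\psi\rangle)$ depends only on $|\langle f_1|\psi\rangle|^2$.

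You instead reuse Corollary~\ref{cor3}, the $x=t$ instance of Case~3 in Theorem~\ref{th1}. This handles every $d$ and arbitrary $|\phi_1\rangle,|\phi_2\rangle$ at once. You are right that for $d\geqslant 3$ the Kraus set there must be completed. The completing operator is simply the projector onto the complement of $\mathrm{span}\{|f_1\rangle,|u\rangle\}$, which annihilates $|\psi\rangle$, so (T4) applies exactly as you say. Your explicit check that $h$ is well defined fills a gap the paper leaves open.

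One small correction concerns the degenerate cases. The construction of Corollary~\ref{cor3} degenerates only when $1-[\lambda x+(1-\lambda)y]\in\{0,1\}$. It does not degenerate when $x$ or $y$ alone equals $0$ or $1$; those cases go through unchanged. In the genuinely degenerate cases concavity is immediate, because either $\lambda\in\{0,1\}$ or $x=y$. This is precisely the paper's Case~1.
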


\begin{proof}		
Let $\mT$ be a texture monotone and $x =|\la f_1|\psi\ra|^{2}$. We define a function $h:[0,1]\rightarrow[0, +\infty)$ by
\beax h(x) = \mT(|\psi\lr\psi|).\eeax  
It admits item (i) clearly from conditions (T1). 

Suppose that $x,y\in[0,1]$ and $x\geqslant y$.
We can find two pure states $|\varphi_1\ra,|\varphi_2\ra$ such that
$|\la f_1|\varphi_1\ra|^2=x$ and $|\la f_1|\varphi_2\ra|^2=y$.
By Corollary~\ref{cor2}, there exists a texture-free operation $\mE$ such that $\mE(\varphi_2)=\varphi_1$.
By item (T2), we get
$$
\mT\big(|\varphi_1\lr\varphi_1|\big)\leqslant \mT\big(|\varphi_2\lr\varphi_2|\big).
$$
That is, $h(x)\leqslant h(y)$, namely, $h$ is decreasing on $[0,1]$.

To prove item (iii), for any $x, y\in [0,1]$ and $\lambda \in [0,1]$, we consider two cases. 

\textit{Case 1: $\lambda x+(1-\lambda)y=0$ or $\lambda x+(1-\lambda)y=1$.} In this case, it is straightforward to verify that condition (iii) holds.

\textit{Case 2: Otherwise.}	Let
\beax		
|\psi_1\ra&=a_1|1\ra+a_2|2\ra,\\
|\psi_2\ra&=b_1|1\ra+b_2|2\ra,\\
|\phi\ra&=c_1|1\ra+c_2|2\ra,\\
\eeax
with
\begin{align*}
a_1&=\frac{\sqrt{2x}+\sqrt{2-2x}}{2},\\
a_2&=\dfrac{\sqrt{2x}-\sqrt{2-2x}}{2},\\
b_1&=\frac{\sqrt{2y}+\sqrt{2-2y}}{2},\\
b_2&=\dfrac{\sqrt{2y}-\sqrt{2-2y}}{2},\\
c_1&=\frac{\sqrt{2[\lambda {x}+(1-\lambda)
		{y}]}+\sqrt{2-2[\lambda {x}+(1-\lambda)
		{y}]}}{2},\\
c_2&=\frac{\sqrt{2[\lambda {x}+(1-\lambda)
		{y}]}-\sqrt{2-2[\lambda {x}+(1-\lambda)
		{y}]}}{2}.
\end{align*}
Then $c_1^2-c_2^2\neq0$.
By direct calculation, $|\la f_1|\psi_1\ra|^2=x$,
$|\la f_1|\psi_2\ra|^2=y$,
$|\la f_1|\phi\ra|^2=\lambda {x}+(1-\lambda){y}$.
Taking
\beax
K_1=\begin{pmatrix}
x_1&z_1 \\
z_1&x_1
\end{pmatrix},~
K_2=\begin{pmatrix}
x_2&z_2\\
z_2&x_2
\end{pmatrix},
\eeax
with 
\beax
\begin{cases}
x_1=\dfrac{\sqrt{\lambda}a_1c_1-\sqrt{\lambda}a_2c_2}{c_1^2-c_2^2},\\
z_1=\dfrac{\sqrt{\lambda}a_2c_1-\sqrt{\lambda}a_1c_2}{c_1^2-c_2^2},\\
x_2=\dfrac{\sqrt{1-\lambda}b_1c_1-\sqrt{1-\lambda}b_2c_2}{c_1^2-c_2^2},\\
z_2=\dfrac{\sqrt{1-\lambda}b_2c_1-\sqrt{1-\lambda}b_1c_2}{c_1^2-c_2^2}.
\end{cases}
\eeax
One can verify that $\sum_{j=1}^2 K_j^\dagger K_j = I$.
Therefore, $\mE(\cdot)=K_1(\cdot) K_1^\dagger +K_2(\cdot) K_2^\dagger$ is a texture-free operation. 
In addition,
\beax
\begin{pmatrix}
x_1&z_1\\
z_1&x_1
\end{pmatrix}
\begin{pmatrix}
c_1\\
c_2
\end{pmatrix}
=
\sqrt{\lambda}
\begin{pmatrix}
a_1\\
a_2
\end{pmatrix}
\eeax			
and
\beax			
\begin{pmatrix}
x_2&z_2 \\
z_2&x_2
\end{pmatrix}
\begin{pmatrix}
c_1\\
c_2
\end{pmatrix}
=
\sqrt{1-\lambda}
\begin{pmatrix}
b_1\\
b_2
\end{pmatrix}.
\eeax
Namely, $\mE(|\phi\ra\la\phi|)=\lambda|\psi_1\ra\la\psi_1|+(1-\lambda)|\psi_2\ra\la\psi_2|$.
From (T4), we obtain 
\beax
\lambda \mT(|\psi_1\ra)+(1-\lambda)\mT(|\psi_2\ra)\leqslant \mT(|\phi\ra).
\eeax
That is, $h[\lambda x+(1-\lambda)y]\geqslant\lambda h(x)+(1-\lambda)h(y)$, namely, $h$ is concave.				
\end{proof}

Only the geometric texture measure $\mT_g$ is a convex-roof extended texture measure so far. Of course, we can construct other convex-roof extended texture measures. For example, if we take $h(x)=\sqrt[3]{1-x}$, the corresponding $\mT_F$ defined by Eqs.~\eqref{fun1} and~\eqref{fun2} is a well-defined convex-roof extended texture measure. By Theorem~\ref{th5}, all texture monotones correspond to some non-negative function $h$ satisfying items (i)-(iii). For more clarity, we list them in Table~\ref{tab:comparison}, where the associated functions $h$ can be compared explicitly.

We now obtain different classes of texture measures. We find that $\mT_{\Fi}$ (or equivalently $\mT_g$) belongs to different classes simultaneously, or equivalently, it can be constructed via different approaches. For simplicity, we compare these different classes of texture measures in Table~\ref{tab:existing}.

\begin{table*}
	\caption{Classification of texture measures.}
	\label{tab:existing}
	\begin{ruledtabular}
		\begin{tabular}{lcccccccccccc}
			Texture measure
			& $\mT_{tr}$
			& $\mT_{\Fi}$
			& $\mT_B$
			& $\mT_r$
			& $\mT_{\mu}^{TS}$
			& $\mT_{he}$
			& $\mT_J$
			& $\mT_R$
			& $\mT_g$
			& $\mT_w$
			& $\mT_{rob}$
			& $\mT_{\alpha}^{\mathrm{skew}}$\\
			\midrule
			$\mT$ via contractive distance
			& $\checkmark$
			& $\checkmark$
			& $\checkmark$
			& $\checkmark$
			& $\checkmark$
			& $\checkmark$
			& $\checkmark$
			& 
			& $\checkmark$
			& 
			& 
			&\\
			${\mT}_F$
			& 
			& $\checkmark$
			& 
			& 
			& 
			& 
			& 
			& 
			& $\checkmark$
			& 
			&
			& \\
			${\mT}_{\check{h}}$
			& 
			& $\checkmark$
			& $\checkmark$
			& 
			& 
			& 
			& 
			& $\checkmark$
			& $\checkmark$
			& 
			&
			& \\
			$\check{\mT}$
			& 
			& $\checkmark$
			& 
			& 
			& 
			& 
			& 
			& 
			& $\checkmark$
			& 
			& 
			&\\
			Others
			& 
			& 
			& 
			& 
			& 
			& 
			& 
			& 
			& 
			& $\checkmark$
			& $\checkmark$
			& {$\checkmark$}
		\end{tabular}%
\end{ruledtabular}
\end{table*}


\section{conclusion}\label{concl}


We develop a resource-theoretic framework for quantum-state texture and systematically investigate state transformations under texture-free operations in higher dimensions. Drawing on quantification methods for other quantum resources, we introduce the texture cost and discuss the converse theorem for convex-roof extended texture measures. It is worth noting that, for other quantum resources, convex-function-based measures are generally incompatible. We also analyze the basis properties (T1)-(T5) for all the proposed measures.

Overall, together with existing results in the literature, our work establishes a systematic framework for characterizing, quantifying, and manipulating quantum-state texture. An important open problem remains---namely, deriving necessary and sufficient conditions for deterministic transformations between arbitrary mixed states---which calls for future investigation.

\begin{acknowledgements}
Y. G. is supported by the National Natural Science Foundation of China under Grant Nos.~12471434, the Program for Young Talents of Science and Technology in Universities of Inner Mongolia Autonomous Region under Grant No. NJYT25010, the High-Level Talent Research Start-up Fund of Inner Mongolia University under Grant No. 10000-A260015/501, and the Inner Mongolia Autonomous Region
Science and Technology Plan Projects under Grant
No. 2025KYPT0098. F. H is supported by Natural Science Foundation of Inner Mongolia nuder Grant No. 2026MS0238. S. D. is supported by the National Natural Science Foundation of China under Grant No.~12271452.
\end{acknowledgements}

\appendix


\section*{Appendix: Further properties of texture measures}\label{A}


We provide here a number of examples that demonstrate the failure of certain texture measures to satisfy strong monotonicity or direct-sum additivity, thus establishing the properties recorded in Table~\ref{tab:property}.

\begin{example}\label{strong}
Consider the case of dimension $d=3$. We extend $|f_1\ra$ to form an orthonormal basis $\big\{|f_1\ra,|e_2\ra,|e_3\ra\big\}$ and take a free operation $\mE$ with Kraus operators
\beax
K_0 = |f_1\ra\la f_1| + |e_2\ra\la e_2|,~ K_1 = |e_3\ra\la e_3|.
\eeax
Let
\beax
|\psi\ra = \dfrac{|f_1\ra + |e_2\ra}{\sqrt{2}},\quad \rho = \dfrac{1}{2}|\psi\ra\la\psi| + \dfrac{1}{2}|e_3\ra\la e_3|. 
\eeax
Then
\begin{align*}
p_0&=\tr\big(K_0\rho K_0^\dagger\big)=\dfrac{1}{2},
&
p_1&=\tr\big(K_1\rho K_1^\dagger\big)=\dfrac{1}{2},\\
\rho_0&=\dfrac{K_0\rho K_0^\dagger}{p_0}=|\psi\ra\la\psi|,
&
\rho_1&=\dfrac{K_1\rho K_1^\dagger}{p_1}=|e_3\ra\la e_3|.
\end{align*}
The eigenvalues of $\rho-\tau_f$ are $\dfrac{-1+\sqrt{5}}{4}, \dfrac{-1-\sqrt{5}}{4}, \dfrac{1}{2}$.
Hence,
\beax
\mT_{tr}(\rho)=\dfrac{1}{2}\|\rho-\tau_f\|_\tr=\dfrac{1+\sqrt{5}}{4}.
\eeax
On the other hand, direct calculation gives
\beax
\mT_{tr}(\rho_0)=\dfrac{1}{\sqrt2},\quad \mT_{tr}(\rho_1)=1.
\eeax
This leads to $\sum_{j=0}^1 p_j \mT_{tr}(\rho_j)=\dfrac{1}{2}\left(\dfrac{1}{\sqrt{2}} + 1\right)\approx 0.8536>\dfrac{1+\sqrt{5}}{4} \approx 0.8090$, namely, $\mT_{tr}$ violates the strong monotonicity.

For $\mT_J$, we have $\mT_J(\rho) \approx 0.7238<\sum_{j=0}^1 p_j \mT_J(\rho_j)=\frac12 H_2\left(\dfrac{2+\sqrt{2}}{4}\right)+\frac12 H_2\left(\dfrac{1}{2}\right)\approx 0.8004$, i.e., the strong monotonicity fails either.

For the Tsallis relative entropy measure $\mT_{\mu}^{TS}$,
\beax
\mT_{\mu}^{TS}(\rho_0)&=&\dfrac{1}{2(1-\mu)},  \\
\mT_{\mu}^{TS}(\rho_1)&=&\dfrac{1}{1-\mu},  \\
\mT_{\mu}^{TS}(\rho)&=&\dfrac{1-2^{-\mu-1}}{1-\mu}.
\eeax
Note that $2^{-\mu-1}>\dfrac{1}{4}$ for any $\mu\in(0,1)$, this implies $1-2^{-\mu-1}<\dfrac{3}{4}$. Thus ${\mT_{\mu}^{TS}(\rho)< \sum_{j=0}^1 p_j \mT_{\mu}^{TS}(\rho_j)}$. 

For the measure $\mT_{he}$, it is straightforward that
\beax
\mT_{he}(\rho_0)=1, ~
\mT_{he}(\rho_1)=2, ~
\mT_{he}(\rho)&=2-\dfrac{1}{\sqrt{2}}.
\eeax
Clearly, $\mT_{he}(\rho)<\sum_{j=0}^1 p_j\mT_{he}(\rho_j)$.
\end{example}

We now show that the $\mT_{\alpha}^{\mathrm{skew}}$ satisfies neither monotonicity nor strong monotonicity under free operations, namely it is not a well-defined texture measure.

\begin{example}\label{2}
For the qubit system, we let $\{|f_1\ra, |e_2\ra\}$ be the reference basis. Take a free operation $\mE$ with Kraus operators
\beax
K_0 = |f_1\ra\la f_1|,~K_1 = |+\ra\la e_2|,
\eeax
where $|+\ra=\dfrac{|f_1\ra+|e_2\ra}{\sqrt{2}}$.
We choose the input state $\rho=|e_2\ra\la e_2|$. Then
\beax
\mT_{\alpha}^{\rm skew}(\rho)
&=&
\la f_1|\rho|f_1\ra
-
\la f_1|\rho^\alpha|f_1\ra
\la f_1|\rho^{1-\alpha}|f_1\ra \\
&=& 0.
\eeax
But $\mE(\rho)=|+\ra\la +|$. Hence, $\mT_{\alpha}^{\rm skew}\bigl(\mE(\rho)\bigr)=\dfrac{1}{4}$, i.e., $\mT_{\alpha}^{\rm skew}$ does not satisfy monotonicity under free operations. As monotonicity follows from the strong monotonicity and the convexity, the failure of monotonicity implies that $\mT_{\alpha}^{\rm skew}$ necessarily fails to obey the strong monotonicity.
\end{example}

At last, we show below that  $\mT_{\Fi}$, $\mT_r$, $\mT_{rob}$ and $\mT_{\alpha}^{\rm skew}$ do not satisfy the direct-sum additivity condition (T5).

\begin{example}\label{direct}
Let $\rho_1=\rho_2=\tau_f$ be a qubit state, $p=\dfrac{1}{2}$. Let $\rho=\dfrac{1}{2}\rho_1\oplus\dfrac{1}{2}\rho_2$. We have
\begin{align*}
&\mT_{\Fi}\left(\dfrac{1}{2}\rho_1\oplus\dfrac{1}{2}\rho_2\right)=\dfrac{1}{2}, ~
\mT_r\left(\dfrac{1}{2}\rho_1\oplus\dfrac{1}{2}\rho_2\right)=\infty, \\
&\mT_{rob}\left(\dfrac{1}{2}\rho_1\oplus\dfrac{1}{2}\rho_2\right)=\infty.
\end{align*}
This demonstrates that $\mT_{\Fi}$, $\mT_r$, and $\mT_{rob}$ fail to satisfy the direct-sum additivity condition (T5). 

Taking $\sigma_2=|-\ra\la-|$ with $|-\ra=\dfrac{1}{\sqrt{2}}(|0\ra-|1\ra)$, we can obtain
\beax
&\mT_{\alpha}^{\rm skew}\left(\dfrac{1}{2}\rho_1\oplus\dfrac{1}{2}\sigma_2\right)=\dfrac{1}{8},\\
&\dfrac{1}{2}\mT_{\alpha}^{\rm skew}(\rho_1)+\dfrac{1}{2}\mT_{\alpha}^{\rm skew}(\sigma_2)=0.
\eeax
Therefore, $\mT_{\alpha}^{\rm skew}$ does not satisfy condition (T5).
\end{example}

\end{document}